\theoremstyle{plain}
\newtheorem{theorem}{Theorem}[section]
\newtheorem{lemma}[theorem]{Lemma}
\theoremstyle{definition}
\numberwithin{equation}{section}
\numberwithin{equation}{section}
\newcommand{\C}{\mathbb{C}}
\newcommand{\R}{\mathbb{R}}
\newcommand{\N}{\mathbb{N}}
\newtheorem{cor}[theorem]{Corollary}
\newtheorem{prop}[theorem]{Proposition}
\newcommand{\ip}[2]
{\ensuremath{\langle #1,#2 \rangle}}
\begin{document}
\title[Discrete Scr{\"o}dinger Operators]{Half Line Titchmarsh-Weyl $m$ Functions of Vector-Valued Discrete Schr{\"o}dinger Operators}
	
\author[Acharya]{Keshav Raj Acharya}
\address{Department of Mathematics,
Embry-Riddle Aeronautical University,
1 Aerospace Blvd., Daytona Beach, FL 32114, U.S.A.}
\email{acharyak@erau.edu}

\author[McBride]{Matt McBride}
\address{Department of Mathematics and Statistics,
Mississippi State University,
175 President's Cir., Mississippi State, MS 39762, U.S.A.}
\email{mmcbride@math.msstate.edu}

\thanks{}
\date{\today}

\begin{abstract}
	 We show that the half-line $m$ functions associated with the vector-valued Schr\"odinger operators are the elements in the Siegel upper half space. We introduce a metric on the space of $m$ functions associated to the vector-valued discrete Schr\'odinger operators. Then we show that the action of transfer matrices on these $m$ functions is distance decreasing.  
\end{abstract}
\maketitle
\thispagestyle{empty}
	
\section{Introduction}
The spectral theory of one dimensional Schr\"odinger operator has been well studied. The extension of the theory to higher dimensions has yet to be developed and is of interest of many researchers.  There is some literature which deals with the Schr\"odinger equations in higher dimensions with matrix-valued potentials, for example see \cite{SF, CGH, JFRA, DD, JG}. Likewise, as the theory of Titchmarsh-Weyl $m$ functions provide elegant methods in describing the spectrum of the associated operators, it is natural to study these functions for vector-valued discrete Schr\"odinger operators. Some basic theory has been developed in \cite{KA1}. In order to study these $m$ functions we need to study matrix-valued Herglotz functions. In this regard, there has been some studies about the matrix-valued Herglotz functions, for example see \cite{FG}. Our goal in this paper is to discuss some basic properties of half-line Titchmarsh-Weyl $m$ functions associated to the vector-valued discrete Schr\"odinger operators. 

We begin by considering an equation of the form,
\begin{equation}\label{ds}
y(n+1)+y(n-1)+B(n)y(n)=zy(n)\,,\, z\in \C
\end{equation}
where for fixed $n$, $y(n) =(y_1(n),y_2(n),\ldots\,,y_d(n))^t \in \C^d$ and $B(n)\in\R^{d\times d}$ is a symmetric matrix. Here $\{y(n)\}$ and $\{ B(n)\}$ are respectively vector-valued and matrix-valued sequences. One can consider $B(n)$ as a Hermitian matrix but for the sake of simplicity we only consider $B(n)$ as a real symmetric matrix.  In this paper, we give insight into half-line $m$ functions and discuss  properties these functions enjoy. Then we describe the action of transfer matrices associated to equation \eqref{ds} on these functions. We then introduce the metric on the space of $m$ functions an extended analogous of hyperbolic metric in complex upper half plane. Finally we observe the distance properties. The paper is organized as follows: We present the preliminaries in Section 2. In Section 3, we define the half-line $m$ functions following the definition in \cite{KA1}, present very important properties. Then we define a metric on the space of $m$ functions and show that action of transfer matrices on $m$ functions is distance decreasing.   

\section{Preliminaries}
Similar to the one dimensional space, we consider the equation \eqref{ds} in a Hilbert space. 	 
Let $\ell^2(\N,\C^d)$ be the Hilbert space of square summable vector-valued sequences. That is 
\begin{equation*}
\ell^2(\N,\C^d) = \left\{y: \sum_{n=1}^{\infty} y(n)^*y(n) < \infty\right\}\,.
\end{equation*} 

Equation (\ref{ds}) induces a vector-valued discrete Schr\"odinger operator $J$ on $\ell^2(\N,\C^d)$ as 
\begin{equation*}
(Jy)(n)=\left\{
\begin{aligned}
&y(n+1)+y(n-1)+B(n)y(n) &&\textrm{if }  n>1 \\
&y(1) + B(1) y(2) &&\textrm{if } n=1\,.
\end{aligned}\right.
\end{equation*}	 
Notice from this we see that $J$ is a bounded self-adjoint operator.	

Define the difference expression $\tau: \ell^2(\N,\C^d) \to \ell^2(\N,\C^d)$ by
\begin{equation}
\tau(u(n)) = u(n+1)+u(n-1) + B(n)u(n)
\end{equation}
which looks like $J$ but acts on any vector-valued sequence. In fact, for any $z\in\C$, a solution $u(n)$ of $ (\tau-z)u = 0 $ is uniquely determined by the values $u(n_0)$ and $u(n_0 +1)$, therefore the solution space is a $2d$ dimensional vector space of sequences.	It is shown in \cite{KA1} that there are precisely $d$ linearly independent solutions to equation (\ref{ds}) that are in $\ell^2 (\N, \C^d)$.
	
One first observation is to connect equation (\ref{ds}) with a second order differential expression, similar to the one in one dimensional space, see \cite{GT}.   Indeed, define the following first order difference operator $\Delta:\ell^2(\N,\C^d)\to \ell^2(\N,\C^d)$ by
\begin{equation*}
(\Delta u)(n) = u(n+1) - u(n).
\end{equation*}
An immediate calculation yields the adjoint to $\Delta$, namely $\Delta^*:\ell^2(\N,\C^d)\to \ell^2(\N,\C^d)$ via
\begin{equation*}
(\Delta^* u)(n)= u(n-1)- u(n)\,.
\end{equation*}	

Using the definition of $\tau$, consider the following calculations:
\begin{equation*}
\begin{aligned}
-(\Delta^*\Delta u)(n) &= -\Delta^*(u(n+1) - u(n)) = -u(n) +u(n+1) +u(n-1) - u(n) \\ 
& = (\tau u)(n)  - (B(n) + 2I)u(n)\,,
\end{aligned}
\end{equation*}
and
\begin{equation*}
\begin{aligned}
-(\Delta\Delta^*u)(n) & = -\Delta( u(n-1) -u(n)) = u(n+1) + u(n-1) - 2u(n) \\ 
&=(\tau u)(n)-(B(n) + 2I)u(n)\,.
\end{aligned}
\end{equation*}	
Notice that this is similar to the vector-valued Sturm-Liouville differential expression
\begin{equation*}
-\frac{d^2}{dx^2}(y) + V(x)y
\end{equation*}
where $y$ is in some properly chosen domain and $V$ is a potential function.
	
For any two vector valued sequences $\{x(n)\}, \{v(n)\}$, recall that the \textit{Wronskian} is defined by
\begin{equation} 
W_n(x(n),y(n))= x(n+1)^t y(n)- x(n)^t y(n+1)\,.
\end{equation}
It can be easily shown that if $u(n)$ and $v(n)$ are any two solutions to equation (\ref{ds}), then $W_n(u(n),v(n))$ is independent of $n$. This definition can be easily generalized to $d\times d$ matrix-valued sequences $\{X(n)\}$ and $\{Y(n)\}$ as follows:
\begin{equation} 
W_n(X(n),Y(n))= X(n+1)^t Y(n)- X(n)^t Y(n+1)\,.
\end{equation}
In this case $W_n(X(n),Y(n))$ is a $d\times d$ matrix-valued function.  It was shown in \cite{KA1} that if $X(n)$ and $Y(n)$ are matrix valued solutions to equation (\ref{ds}), then is $W_n(X(n),Y(n))$ independent of $n$. Using Abel summation we arrive at Green's formula:
\begin{equation}\label{gi} 
\sum_{j=n}^m \left(X(j)^*(\tau Y)(j) - (\tau Y)(j)^*Y(j)\right) = W_{n-1}(\overline{X}(n-1),Y(n-1))- W_m(\overline{X}(m),Y(m)) \,.
\end{equation}
Please see \cite{KA1} for a detailed computation of formula \eqref{gi}.
	
\section{Half line Weyl $m$ functions}
A $d\times 2d$ matrix-valued sequence $Y(n)$ is said to be a fundamental matrix solution to equation (\ref{ds}) if it satisfies equation (\ref{ds}) in matrix form and columns of $Y(n)$ form a set of linearly independent solutions to the vector form to equation (\ref{ds}). 

Write $Y = (U, V)$ where $U$ and $V$ are $d\times d$ matrix-valued solutions to equation (\ref{ds}).  One of the goals of this paper is to discuss the half line $m$ function.  Thus, we consider the equation given by (\ref{ds}) on $\N_0 = \N \cup\{0\}$. Let $\N_- = \{0, 1,\ldots n\}$ and $\N_+ =\{ n+1, N+2,\ldots\}$. Suppose $U$ and $V$ satisfy the following initial conditions at $n$: 	 
\begin{equation*}
\begin{aligned}
&U(n)  =  - I, && V(n) = 0 \\ 
&U(n+1)= 0,  && V(n+1) = I\,.
\end{aligned}
\end{equation*}
Define the Weyl $m$ functions on $\N_-$ and $\N_+$ by	
\begin{equation}\label{ss}
F_{\pm}(n) = U(n) \pm  M_{\pm}(n) V(n)
\end{equation}
where we require that $F_-(0) = 0$ and $F_+\in\ell^2(\N,\C^{d \times d})$. Notice that the columns of $F_{\pm}(n)$ form a linearly independent set of solutions to equation (\ref{ds}) and therefore for each $n$, $F_{\pm}(n)$ are invertible.  For $z\in\C^+$, these half-line $m$ functions are uniquely determined and are given by:
\begin{equation} \label{mf}
M_+(n,z) = - F_+(n+1,z)F_+(n,z)^{-1},\ \  M_-(n,z)  =  F_-(n+1,z)F_-(n,z)^{-1}\,.
\end{equation}
The theory of these $m$ functions has been very useful tool in the Spectral theory of Schr\"odinger operators. There are numerous literatures dealing with these $m$ functions for one dimensional  Schr\"odinger operators, see \cite{GT} as an example, for Jacobi operators. There are also some articles which discuss about the Weyl $m$ functions for matrix-valued Schr\"odinger operators in continuous case, see \cite{SF, CGH,DD, JG,FG}. Our goal is to discuss some basic properties of the $m$ functions defined by \eqref{mf}.

First we show that $M_{\pm}(n)$ can be expressed in terms of a resolvent operator. As such we introduce the following notation.  Let $J_{\pm}$ be the restriction of $J$ to the space $\ell^2(\N_{\pm},\C^d)$ and denote the Dirac delta type vector-valued sequences as follows:
\begin{equation*}
\begin{aligned}
&\bbdelta_1^1 = \left \{\begin{bmatrix}1\\0\\  \vdots \\ 0 \end{bmatrix}, \begin{bmatrix}0\\0\\  \vdots \\ 0 \end{bmatrix}, \dots  \right\}, \bbdelta_2^1 = \left \{\begin{bmatrix}0\\1\\  \vdots \\ 0 \end{bmatrix}, \begin{bmatrix}0\\0\\  \vdots \\ 0 \end{bmatrix},  \dots  \right\},\ldots,\\
&\bbdelta_d^1 = \left \{ \begin{bmatrix}0\\0\\  \vdots \\ 1 \end{bmatrix}, \begin{bmatrix}0\\0\\  \vdots \\ 0 \end{bmatrix}, \dots  \right\},\ldots,\bbdelta_d^n = \left \{ \begin{bmatrix}0\\0\\  \vdots \\ 0 \end{bmatrix},  \dots   \begin{bmatrix}0\\0\\  \vdots \\ 1 \end{bmatrix}, \dots \right\}
\end{aligned}
\end{equation*}

Similarly for matrix-valued sequence denote:
\begin{equation*}
\mathbb{\Delta}_1 = \left \{I,0\ldots\right\},\ldots,\mathbb{\Delta}_n = \left \{0,\ldots,I,\ldots\right\}
\end{equation*}
where $I$ and $0$ are the $d\times d$ identity and null matrix respectively. We have the following proposition:
\begin{prop}
Let $z\in \C^+$.
\begin{equation}\label{mr}
M_+(n,z) = \ip{\mathbb{\Delta}_{n+1}}{ (J_{+}-zI)^{-1} \mathbb{\Delta}_{n+1}} = \Big( \ip{ \bbdelta_i^{n+1}}{ (J_{+}-zI)^{-1} \bbdelta_j^{n+1}} \Big)
\end{equation}
\end{prop}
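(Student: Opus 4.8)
The plan is to evaluate the right-hand side of \eqref{mr} by computing the matrix-valued sequence $g := (J_+ - zI)^{-1}\mathbb{\Delta}_{n+1}$ explicitly. Since $z\in\C^+$ and $J_+$ is bounded and self-adjoint (it is the half-line operator obtained from $J$ by the Dirichlet-type truncation at the endpoint), $J_+ - zI$ is boundedly invertible, so $g\in\ell^2(\N_+,\C^{d\times d})$ is well defined; because $\mathbb{\Delta}_{n+1}(k)=\delta_{k,n+1}I$, the definition of the inner product on matrix-valued sequences gives $\ip{\mathbb{\Delta}_{n+1}}{(J_+ - zI)^{-1}\mathbb{\Delta}_{n+1}} = g(n+1)$, and reading this entrywise shows that $\big(\ip{\bbdelta_i^{n+1}}{(J_+ - zI)^{-1}\bbdelta_j^{n+1}}\big)$ is the same matrix $g(n+1)$ (its $(i,j)$ entry being the $i$-th coordinate of the $j$-th column of $g(n+1)$). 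Thus everything reduces to proving $g(n+1) = M_+(n,z)$.

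Next I would unwind the identity $(J_+ - zI)g = \mathbb{\Delta}_{n+1}$. For $j\ge n+2$ it reads $g(j+1)+g(j-1)+(B(j)-zI)g(j)=0$, i.e. $(\tau g)(j)=z\,g(j)$; at the left endpoint $j=n+1$, where $J_+$ acts by the truncated rule $(J_+g)(n+1)=g(n+2)+B(n+1)g(n+1)$ (no $g(n)$ term, since $n\notin\N_+$), it reads $g(n+2)+(B(n+1)-zI)g(n+1)=I$. Now extend $g$ to $\{n\}\cup\N_+$ by declaring $g(n):=-I$. Then the endpoint relation becomes $g(n+2)+g(n)+(B(n+1)-zI)g(n+1)=0$, so the extended sequence $g$ satisfies equation \eqref{ds} in matrix form for every $j\ge n+1$, it lies in $\ell^2$, and $g(n)=-I$.

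Then I would invoke a uniqueness argument. By \eqref{ss} and the initial conditions at $n$ we have $F_+(n,z)=U(n)+M_+(n,z)V(n)=-I$ and $F_+(n+1,z)=U(n+1)+M_+(n,z)V(n+1)=M_+(n,z)$; moreover $F_+$ is a matrix-valued $\ell^2$ solution of \eqref{ds} (its columns are $\ell^2$ solutions) with $F_+(n,z)=-I$ invertible. Since the space of $\ell^2(\N,\C^d)$ solutions of \eqref{ds} is $d$-dimensional by \cite{KA1}, every matrix-valued $\ell^2$ solution $Y$ has the form $Y(j)=\Phi(j)C$ with $C\in\C^{d\times d}$ fixed and $\Phi$ a matrix whose columns form a basis of the vector $\ell^2$ solutions; invertibility of $F_+(n,z)=\Phi(n)C_0$ forces $\Phi(n)$ to be invertible, so the $\ell^2$ solution taking the value $-I$ at $n$ is unique, namely $F_+$. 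Hence the extended $g$ agrees with $F_+$ on $\N_+$, and in particular $g(n+1)=F_+(n+1,z)=M_+(n,z)$, which is \eqref{mr}.

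The points that need care are the precise form of $J_+$ at its left endpoint (the Dirichlet-type truncation inherited from $J$) and the entrywise reformulation of the inner product; the only genuinely substantive step is the uniqueness of the $\ell^2$ matrix solution with a prescribed invertible value at $n$, which rests on the $d$-dimensionality of the $\ell^2$ solution space quoted from \cite{KA1}. Everything else is direct substitution.
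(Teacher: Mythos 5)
Your proposal is correct and follows essentially the same route as the paper: both identify the columns of $(J_+-zI)^{-1}\mathbb{\Delta}_{n+1}$ as $\ell^2$ solutions of \eqref{ds} away from $n+1$, use the $d$-dimensionality of the $\ell^2$ solution space to write them as $F_+C$, and then pin down the normalization from the delta relation at the endpoint $n+1$. The only cosmetic difference is that you absorb that endpoint relation into the extension $g(n):=-I$ and invoke uniqueness of the $\ell^2$ solution with prescribed invertible value at $n$, whereas the paper compares the recursions at $n+2$ to conclude $C=I$.
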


\begin{proof}
Let $g_j(k,z) =  (J_ +-zI)^{-1} \bbdelta_j^{n+1}(k)$.   Notice that $g_j(k,z)$, for $j = 1,\ldots, d$ and for $k>n+1$ are solutions to equation (\ref{ds}) and belong to $\ell^2(N_+,\C^d )$.  Define $G(k,z)$ by
\begin{equation*}
G(k,z)= \Big(g_1(k,z), g_2(k,z), \dots, g_d(k,z)\Big)\,.
\end{equation*}
From this definition it follows that $G(k,z)$ is a matrix valued solution to equation (\ref{ds}) for $k>n+1$ and $G(k,z)\in\ell^2(N_+,\C^{d\times d})$. Thus there exists an invertible constant matrix $C\in\C^{d\times d}$ such that 
\begin{equation}\label{c1}
G(k,z) = F(k,z)C\,.
\end{equation}
Since $F(n,z) = -I$ and $\ G(n,z) = - C$, it follows that 
\begin{equation*}
M_+(n,z) = - F_+(n+1,z)F(n,z)^{-1} =  -G(n+1,z) G(n,z)^{-1} =  G(n+1,z) C^{-1}\,.
\end{equation*}
To find the matrix $C$, we need only to compare the values at $n+2$ place.  Notice that 
\begin{equation*}
(J_+ -z I)G(n+1, z) = (\bbdelta_1^1, \bbdelta_2^1, \dots \bbdelta_d^1)
\end{equation*}
and hence $(J_+ -z I)G(n+1, z) = I$. Therefore,
\begin{equation}\label{c2}
G(n+2,z) = (z-B(n+1))G(n+1, z) +I\,.
\end{equation}
Moreover, since $F(n,z)$ is a solution to equation (\ref{ds}) we have
\begin{equation*}
F(n+2, z) = (z-B(n+1))F(n+1, z) - F(n, z)\,.
\end{equation*}
From equation (\ref{c1}) it follows that
\begin{equation}\label{c3}
G(n+2, z) = (z-B(n+1))F(n+1, z)C -F(n,z)C\,.
\end{equation}
Finally, comparing equations (\ref{c2}) and (\ref{c3}) we deduce that $-F(n,z)C = I$. Since $F(n,z) = -I$, it follows that $C = I$.
 Therefore, we have that $M_+(n,z) = G(n+1,z)$. 

Thus the matrix  $M_+(n,z)$ by comparing the values at $k = n+1$, $$ G(n+1,z) = F(n+1, z) C$$  

Now we find the entries of the matrix solution $G(n+1,z)$. Indeed, for each $j= 1,\ldots,d$, the components $g_{ij(n+1,z)}$ for 
\begin{equation*}
g_j(n+1, z) = \begin{bmatrix}g_{1j}(n+1,z)\\g_{2j}(n+1,z)\\  \vdots \\ g_{dj} (n+1,z) \end{bmatrix}
\end{equation*}
are given by 
\begin{equation*}
g_{ij(n+1,z)}= \ip{ \bbdelta_i^{n+1}}{ (J_{+}-zI)^{-1} \bbdelta_j^{n+1}}
\end{equation*}
for $i =1,\ldots,d$.  It follows that 
\begin{equation*}
M_+(n,z) = \ip{\mathbb{\Delta}_{n+1}}{ (J_{+}-zI)^{-1} \mathbb{\Delta}_{n+1}} = \Big( \ip{ \bbdelta_i^{n+1}}{ (J_{+}-zI)^{-1} \bbdelta_j^{n+1}} \Big)\,.
\end{equation*}
This completes the proof.
\end{proof}
 Eqn. \eqref{mr} suggests that $ \overline {M_+(n,z)} = M_+(n,\bar{z}).$ \\ 
We now show that $M_{\pm}(n,z)$ are matrix-valued Herglotz functions. 
 
\begin{lemma}\label{lem2.2}
For any $z \in \C^+$ and any $n$, $M_{\pm}(n,z)$ are symmetric and satisfy the relation
\begin{equation}\label{rr}
M_{\pm}(n,z) + M_{\pm}(n \mp1,z)^{-1} \mp \left(zI-B(n) \right) = 0\,. 
\end{equation}
\end{lemma}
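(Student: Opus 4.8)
The plan is to prove the two assertions of the lemma — the symmetry of $M_{\pm}(n,z)$ and the identity \eqref{rr} — separately, each reducing to elementary manipulations of the fundamental solutions $F_{\pm}$ appearing in \eqref{ss}--\eqref{mf}.

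For the symmetry I would use the matrix Wronskian. Since the columns of $F_{\pm}(n)$ solve equation \eqref{ds}, the matrix $W_n(F_{\pm},F_{\pm})=F_{\pm}(n+1)^{t}F_{\pm}(n)-F_{\pm}(n)^{t}F_{\pm}(n+1)$ is independent of $n$, and it is visibly anti-symmetric, so the point is to show it vanishes. For $F_-$ I would evaluate it at $n=0$, where $F_-(0)=0$ forces $W_0(F_-,F_-)=0$. For $F_+$ I would use that $F_+\in\ell^2(\N,\C^{d\times d})$, so $F_+(m)\to 0$ as $m\to\infty$ and the constant matrix $W_m(F_+,F_+)$ must therefore be $0$. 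In either case $F_{\pm}(n+1)^{t}F_{\pm}(n)=F_{\pm}(n)^{t}F_{\pm}(n+1)$ for all $n$, and transposing the formulas $M_+(n,z)=-F_+(n+1,z)F_+(n,z)^{-1}$ and $M_-(n,z)=F_-(n+1,z)F_-(n,z)^{-1}$ from \eqref{mf} (legitimate because each $F_{\pm}(n)$ is invertible) yields $M_{\pm}(n,z)^{t}=M_{\pm}(n,z)$. For $M_+$ there is also the slicker route through the resolvent representation \eqref{mr}: since $B(n)$ is real symmetric, $J_+$ is a real self-adjoint operator, hence $\big(\ip{\bbdelta_i^{n+1}}{(J_{+}-zI)^{-1}\bbdelta_j^{n+1}}\big)_{i,j}$ is symmetric.

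For the recursion I would substitute the three-term recurrence directly into the definition of $M_{\pm}$. Writing \eqref{ds} in matrix form for $F_{\pm}$ and solving for the lower-index term gives $F_{\pm}(k-1)=(zI-B(k))F_{\pm}(k)-F_{\pm}(k+1)$; right-multiplying by $F_{\pm}(k)^{-1}$ converts this into an identity between the matrix ratios $F_{\pm}(k+1)F_{\pm}(k)^{-1}$ and $F_{\pm}(k-1)F_{\pm}(k)^{-1}$, which by \eqref{mf} are, up to sign and at the appropriate sites, $M_{\pm}(n,z)$ and $M_{\pm}(n\mp1,z)^{-1}$; rearranging then gives \eqref{rr}. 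The main obstacle I anticipate is purely the bookkeeping of base points and signs: the pair $U,V$, and hence $M_{\pm}$, is anchored at the site where the normalization $U(n)=-I$, $V(n)=0$, $U(n+1)=0$, $V(n+1)=I$ is imposed, so $M_{\pm}(n\mp1,z)$ refers to the corresponding $\ell^2$- (respectively Dirichlet-) solution re-normalized at the shifted site; two such solutions at neighbouring sites differ only by right multiplication by a constant invertible matrix, and carrying that constant through the computation is exactly what produces the inverse $M_{\pm}(n\mp1,z)^{-1}$ and fixes the sign of $(zI-B(n))$. Once this is set up, \eqref{rr} is one line, and the symmetry of $M_{\pm}(n\mp1,z)$ already established shows that the inverse occurring in \eqref{rr} is again symmetric, consistently with the relation.
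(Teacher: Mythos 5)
Your proposal follows essentially the same route as the paper's own proof: symmetry via the constancy of the matrix Wronskian $W_n(F_{\pm},F_{\pm})$ (evaluated at $n=0$ for $F_-$, where $F_-(0)=0$), and the recursion by right-multiplying the three-term recurrence by $F_{\pm}(n)^{-1}$ and reading off the ratios as $M_{\pm}$. Your limit argument $W_m(F_+,F_+)\to 0$ as $m\to\infty$ (from $F_+\in\ell^2$) supplies the detail the paper dismisses as ``a similar calculation,'' and is the right way to close that case; the only caveat is that the direct computation actually yields $M_+(n,z)+M_+(n-1,z)^{-1}+\left(zI-B(n)\right)=0$, whose sign on $\left(zI-B(n)\right)$ and whose index do not literally match the $\mp$ pattern displayed in \eqref{rr} --- but that discrepancy is already present between the paper's statement and its own proof, not a defect of your argument.
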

 
\begin{proof}
We first apply the Wronskian to show that $M_{\pm}(n)$ are symmetric.  Indeed, for $M_-(n,z)$, consider the following calculation:
\begin{equation*}
\begin{aligned}
&M_-(n,z)^t - M_-(n,z) = \Big(F_-(n+1,z)F_-(n,z)^{-1} \Big)^t - F_-(n+1,z)F_-(n,z)^{-1} \\ 
& = \Big(F_-(n,z)^{-1} \Big)^t F_-(n+1,z)^t - F_-(n+1,z)F_-(n,z)^{-1} \\ 
& = \Big(F_-(n,z)^{-1} \Big)^t \Big(F_-(n+1,z)^t F_-(n,z) - F_-(n,z)^t F_-(n+1,z)\Big) F_-(n,z)^{-1}  \\  
&= \Big(F_-(n,z)^{-1} \Big)^t W_n (F_-(n,z), F_-(n,z)) F_-(n,z)^{-1}\,. 
\end{aligned}
\end{equation*}  
Since $F_-(n,z)$ is a solution to equation (\ref{ds}), $W_n(F_-(n,z),F_-(n,z))$	is independent of $n$. It therefore follows that $W_n (F_-(n,z),F_-(n,z)) = W_0(F_-(0,z),F_-(0,z))= 0$.  This in turn implies that $M_-(n,z)$ is symmetric. A similar calculation shows that $M_+(n,z)$ is also symmetric. All that remains is to show that $M_\pm(n,z)$ satisfies equation (\ref{rr}).

Indeed, from equation (\ref{ds}), we have:
\begin{equation*}
F_+(n+1,z) + F_+(n-1,z) +\big(B(n) -zI\big)F_+(n,z) =  0
\end{equation*} 
and hence 
\begin{equation*}
F_+(n+1,z) F_+ (n,z)^{-1}  + F_+(n-1,z) F_+ (n,z)^{-1} +\big(B(n) -z I\big) = 0\,.
\end{equation*}
This implies that 
\begin{equation*}
M_+(n,z) + M_+(n-1,z)^{-1} + zI-B(n) = 0\,.
\end{equation*}
An analogous calculation shows that for $M_-(n,z)$ we have
\begin{equation*}
M_-(n,z) + M_-(n-1,z)^{-1} + B(n)-zI = 0\,.
\end{equation*}
This completes the proof.
\end{proof}

Recall the imaginary part of $ M_{\pm}(n,z)$ is given by:
\begin{equation*}
\operatorname{Im} M_{\pm}(n,z) =  \frac{1}{2i}\left(M_{\pm}(n,z) -  M_{\pm}(n,z)^*\right)\,.
\end{equation*}
We have the following proposition:
\begin{prop}\label{prop2.3}
For any $z\in \C^+$, the matrices $\operatorname{Im} M_{\pm}(n,z)$ are positive.
\end{prop}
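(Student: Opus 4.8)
The plan is to reduce the claim to a congruence identity for the boundary Wronskian of $F_\pm$. Since $F_\pm(n)$ is invertible for every $n$, it suffices to prove that $F_\pm(n)^*\big(\operatorname{Im} M_\pm(n,z)\big)F_\pm(n)$ is positive definite, for then $\operatorname{Im} M_\pm(n,z) = \big(F_\pm(n)^{-1}\big)^*\big[F_\pm(n)^*\big(\operatorname{Im} M_\pm(n,z)\big)F_\pm(n)\big]F_\pm(n)^{-1}$ is positive definite as well. From \eqref{mf} one has $F_\pm(n+1,z) = \mp M_\pm(n,z)F_\pm(n,z)$; substituting this into
\begin{equation*}
W_n\big(\overline{F_\pm}(n),F_\pm(n)\big) = F_\pm(n+1)^* F_\pm(n) - F_\pm(n)^* F_\pm(n+1)
\end{equation*}
and using $\operatorname{Im} M_\pm = \tfrac{1}{2i}\big(M_\pm - M_\pm^*\big)$ together with the symmetry of $M_\pm(n,z)$ from Lemma \ref{lem2.2} yields
\begin{equation*}
W_n\big(\overline{F_\pm}(n),F_\pm(n)\big) = \pm\, 2i\, F_\pm(n)^*\big(\operatorname{Im} M_\pm(n,z)\big)F_\pm(n)\,,
\end{equation*}
so the whole problem becomes the evaluation of this Wronskian.

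For that I would use Green's formula \eqref{gi} with $X = Y = F_\pm$. Since $F_\pm$ solves $\tau F_\pm = zF_\pm$, the left-hand side of \eqref{gi} collapses to $(z-\bar z)\sum_j F_\pm(j)^* F_\pm(j) = 2i\operatorname{Im}(z)\sum_j F_\pm(j)^* F_\pm(j)$. In the $+$ case I would sum over $j = n+1,\dots,m$ and let $m\to\infty$: because $F_+\in\ell^2(\N,\C^{d\times d})$ we have $F_+(m)\to 0$, so the Wronskian at $m$ vanishes in the limit, and combined with the previous paragraph this gives
\begin{equation*}
F_+(n)^*\big(\operatorname{Im} M_+(n,z)\big)F_+(n) = \operatorname{Im}(z)\sum_{j=n+1}^{\infty} F_+(j)^* F_+(j)\,.
\end{equation*}
In the $-$ case I would instead sum over $j = 1,\dots,n$; the boundary contribution at $0$ is $W_0\big(\overline{F_-}(0),F_-(0)\big)$, which is zero because $F_-(0)=0$, leaving
\begin{equation*}
F_-(n)^*\big(\operatorname{Im} M_-(n,z)\big)F_-(n) = \operatorname{Im}(z)\sum_{j=1}^{n} F_-(j)^* F_-(j)\,.
\end{equation*}

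Positivity is then read off the right-hand sides: $\operatorname{Im}(z)>0$ as $z\in\C^+$, each summand $F_\pm(j)^*F_\pm(j)$ is positive semidefinite, and the sum dominates a single term ($F_+(n+1)^*F_+(n+1)$ in the $+$ case, $F_-(n)^*F_-(n)$ in the $-$ case) which is positive definite since $F_\pm$ is invertible there; hence $F_\pm(n)^*\big(\operatorname{Im} M_\pm(n,z)\big)F_\pm(n)$ is positive definite, and by the congruence remark so is $\operatorname{Im} M_\pm(n,z)$. The step needing the most care is the bookkeeping of signs so that the constant finally appearing is $+\operatorname{Im}(z)$ and not $-\operatorname{Im}(z)$ — this forces consistency between the $\mp$ conventions in \eqref{mf}, the Wronskian sign convention, and the direction of summation in \eqref{gi} — together with the two limiting/vanishing boundary terms (at $\infty$ for $M_+$, at $0$ for $M_-$). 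As an alternative proof in the $+$ case one can avoid \eqref{gi} altogether and argue directly from the resolvent representation \eqref{mr}: for $v\in\C^d$, writing $\phi=\mathbb{\Delta}_{n+1}v$, the first resolvent identity and self-adjointness of $J_+$ give $v^*\big(\operatorname{Im} M_+(n,z)\big)v = \operatorname{Im}(z)\,\big\|(J_+-\bar z I)^{-1}\phi\big\|^2$, which is strictly positive whenever $v\neq 0$.
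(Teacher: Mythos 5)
Your proof is correct and follows essentially the same route as the paper: Green's formula \eqref{gi} with $X=Y=F_\pm$, the vanishing boundary contribution at $0$ (for $M_-$, since $F_-(0)=0$) respectively at $\infty$ (for $M_+$, via square-summability), and the identification of the surviving Wronskian $W_n(\overline{F_\pm}(n),F_\pm(n))$ with the imaginary part of $M_\pm(n,z)$. The only real difference is bookkeeping: you extract $\operatorname{Im} M_\pm(n,z)$ through the congruence $F_\pm(n)^*(\operatorname{Im} M_\pm(n,z))F_\pm(n)$ and then undo it, whereas the paper uses the decomposition $F_\pm = U \pm M_\pm V$ and the normalization $U(n)=-I$, $V(n)=0$ to land directly on $M_\pm(n,z)-M_\pm(n,\bar z)$; your closing resolvent identity for $M_+$ is a genuinely independent alternative (not in the paper) that yields strict positivity in one line from \eqref{mr}.
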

 
\begin{proof}
First we show that $\operatorname{Im} M_-(n,z)$ is a positive matrix. Using Green's formula, equation (\ref{gi}) with $X(n)$ and $Y(n)$ are replaced by $F_-(n,z)$ we get 
\begin{equation}\label{gi1}
\begin{aligned}
&\sum_{j=1}^n\left(F_-(j,z)^*(\tau F_-)(j,z)-(\tau F_-)(j,z)^* F_-(j,z) \right) = \\
&\qquad=W_{0}(\overline{F_-(0,z)},F_-(0,z))- W_n(\overline{F_-(n,z)},F_-(n,z))\,.
\end{aligned}
\end{equation}
Since $F_-(0,z) = 0$ and $(\tau F_-)(j,z) = zF_-(j,z)$, equation (\ref{gi1}) simplifies to 
\begin{equation}\label{gi2} 
(z - \overline{z})\sum_{j=1}^n F_-(j,z)^*F_-(j,z) = -W_n(\overline{F_-(n,z)},F_-(n,z))\,.
\end{equation}
Since $F_-(n,z)$ can be decomposed into the fundamental solutions $U(n,z)$ and $V(n,z)$, that is $F_-(n, z) = U(n,z) - M_-(n,z) V(n,z)$, the linearity of Wronskian implies that
\begin{equation}\label{gi3}
\begin{aligned}
&(z -\bar{z})\sum_{j=1}^n  F_-(j,z)^*F_-(j,z) = \\
&=- W_n(\overline{U(n,z) - M_-(n,z) V(n,z)},U(n,z) - M_-(n,z) V(n,z))  \\
& =  W_n(U(n, \overline{z}), U(n,z) ) + W_n(U(n, \overline{z}),  M_-(n,z) V(n,z) ) \\
&\qquad + W_n( M_-(n,\overline{z})V(n,\overline{z}),U(n,z)) - W_n(M_-(n,\overline{z})V(n,\overline{z}), M_-(n,z) V(n,z))\,.
\end{aligned}
\end{equation}
Using definition of the Wronskian and the boundary conditions at $n$ and $n+1$ we obtain that 

\begin{equation*}
\begin{aligned}
&W_n(U(n,\overline{z}),U(n,z)) = 0\\
&W_n(M_-(n,\overline{z})V(n, \overline{z}),M_-(n,z) V(n,z)) = 0\,.
\end{aligned}
\end{equation*}
It follows that 
\begin{equation*}
\begin{aligned}
&W_n(U(n,\overline{z}), M_-(n,z) V(n,z) ) = M_-(n,z)\\ 
&W_n(M_-(n, \overline{z})V(n,\overline{z}),U(n,z)) = -M_-(n, \overline{z})\,.
\end{aligned}
\end{equation*}
Hence equation (\ref{gi3}) becomes
\begin{equation*}
(z - \overline{z})\sum_{j=1}^n F_-(j,z)^*F_-(j,z) = M_-(n,z) - M_-(n, \overline{z})\,. 
\end{equation*}
Since $z\in\C^+$, we have $(z - \overline{z})/2i >0$ and thus it follows that $\operatorname{Im} M_-(n,z)$ is a positive matrix.

Similarly, to show $\operatorname{Im} M_+(n,z)$ is a positive matrix, we replace $X(n)$ and $Y(n)$ both with $F_+(n,z)$ in equation (\ref{gi}). Since $F_+(n,z)$ is defined on $\N_+$, for $N\ge n+1$ we have
\begin{equation}\label{gi4}
\begin{aligned}
&\sum_{j=n+1}^N \left(F_+(j,z)^*(\tau F_+)(j,z) - (\tau F_+)(j,z)^*F_+(j,z)\right)\\
&\quad= W_{n}(\overline{F_+(n,z)},F_+(n,z))- W_N(\overline{F_+(N,z)},F_+(N,z))\,.
\end{aligned}
\end{equation}

Since $F_+(z)\in \ell^2(\N,\C^{d\times d})$ for every $z\in\C^+$, taking the limit as $N \rightarrow \infty$ yields zero on the left-hand side of equation (\ref{gi4}).  Moreover since the Wronskian of solutions to equation (\ref{ds}) is independent of $N$ we get 
\begin{equation*}
\lim_{N\to\infty}W_N(\overline{F_+(N,z)},F_+(N,z)) = 0\,.
\end{equation*} 
Thus using a similar argument as above it follows that
\begin{equation*}
(z - \overline{z})\sum_{j=1}^n F_+(j,z)^*F_+ (j,z) = M_+(n,z) - M_+(n, \overline{z})\,.
\end{equation*}
Therefore $\operatorname{Im} M_+(n,z)$ is a positive matrix completing the proof.
\end{proof}
The fractional linear transformation have been used in analyzing the theory of $m$ functions. More specifically, the transfer matrices associated to the Schr\"odinger equations are fractional linear transformation. For   the matrix-valued Schr\"odhinger operators, the transfer matrices are considered as matrix-valued fractional linear transformation.
For any $T\in\C^{2d \times 2d}$ of the form:
\begin{equation*}
T= \begin{pmatrix} A & B \\ C & D \end{pmatrix}
\end{equation*}
where are $A$, $B$, $C$, and $D$ are $d\times d$ matrices, a {\it matrix-valued fractional linear transformation} is a map $T:\C^{d\times d}\rightarrow\C^{d\times d}$ defined by
\begin{equation}\label{flt}
T(Z) = (AZ + B) (CZ +D)^{-1}\,,
\end{equation}
for all $Z\in \C^{d\times d}$ for which $T$ is well defined. We observe that the transfer matrices associated to the equation \eqref{ds} are in fact matrix-valued fractional linear transformation of the form \eqref{flt}.

If $F(n)$ is a matrix solution to equation (\ref{ds}) we have 
\begin{equation}\label{tm}
\begin{bmatrix} F(n+1) \\ \mp F(n) \end{bmatrix} = \begin{pmatrix} z I -B(n) & \pm I \\ \mp I  & 0 \end{pmatrix} \begin{bmatrix} F(n) \\ \mp F(n-1) \end{bmatrix}\,,
\end{equation}
where $B(n)$ is a $d\times d$ symmetric matrix.  The matrices 
\begin{equation}\label{tm1} 
T_{\pm}(n) = \begin{pmatrix} z I -B(n) & \pm I \\ \mp I  & 0 \end{pmatrix}
\end{equation}
given in equation (\ref{tm}) are called {\it transfer matrices} which describe the evolution of the vectors
\begin{equation*}
\begin{bmatrix} F(n+1) \\ \mp F(n) \end{bmatrix}
\end{equation*}
under iteration of $T_{\pm}(n)$.  These transfer matrices $T_{\pm}(n)$ can be considered as complex matrix-valued linear transformations \eqref{flt} acting on the space of $m$ functions by using \eqref{tm1}:
\begin{equation*}
T_{\pm}(n)M_{\pm}(n) = (z I -B(n))(M_{\pm}(n) \pm I)(\mp I M_{\pm}(n))^{-1}\,.
\end{equation*}
 The  transfer matrices also relate the $m$ functions at $n$ and $n-1$ as shown in the following lemma.
\begin{lemma}\label{mfunctrans}
For any $z\in \C^+$, $M_{\pm}(n) = T_{\pm}(n)M_{\pm}(n-1)$. 
\end{lemma}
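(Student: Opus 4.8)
The plan is to prove this by unwinding both sides of the claimed identity and matching them: on one side, expand the fractional linear action of $T_\pm(n)$ on $M_\pm(n-1,z)$ straight from the definitions \eqref{flt} and \eqref{tm1}; on the other side, use equation \eqref{ds} to put $M_\pm(n,z)$ in the same explicit form.

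First I would write $T_\pm(n)$ as a fractional linear transformation. By \eqref{tm1} its block entries are $A = zI - B(n)$, $B = \pm I$, $C = \mp I$, $D = 0$, so \eqref{flt} gives, for any invertible $Z \in \C^{d\times d}$,
\begin{equation*}
T_\pm(n)(Z) = \big((zI - B(n))Z \pm I\big)(\mp Z)^{-1} = \mp(zI - B(n)) - Z^{-1},
\end{equation*}
where I have used $(\mp Z)^{-1} = \mp Z^{-1}$ and $(\pm I)(\mp I) = -I$, the latter being what makes the $Z^{-1}$ term come out with a uniform minus sign in both the $+$ and $-$ cases. This is legitimate at $Z = M_\pm(n-1,z)$ because $M_\pm(n-1,z)$ is invertible: it equals $\mp F_\pm(n+1,z)F_\pm(n,z)^{-1}$, a product of invertible matrices since the columns of $F_\pm$ are linearly independent (alternatively, $\operatorname{Im} M_\pm(n-1,z) > 0$ by Proposition~\ref{prop2.3}).

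Next I would obtain the matching formula for $M_\pm(n,z)$ itself. Evaluating \eqref{ds} on the matrix solution $F_\pm(\cdot,z)$ and right-multiplying by $F_\pm(n,z)^{-1}$ gives
\begin{equation*}
F_\pm(n+1,z)F_\pm(n,z)^{-1} + F_\pm(n-1,z)F_\pm(n,z)^{-1} + (B(n) - zI) = 0.
\end{equation*}
Since \eqref{mf} yields $F_\pm(n+1,z)F_\pm(n,z)^{-1} = \mp M_\pm(n,z)$ and, after inverting, $F_\pm(n-1,z)F_\pm(n,z)^{-1} = \mp M_\pm(n-1,z)^{-1}$, this rearranges to $M_\pm(n,z) = \mp(zI - B(n)) - M_\pm(n-1,z)^{-1}$, which is the Riccati-type relation underlying Lemma~\ref{lem2.2}. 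Comparing with the previous display gives $M_\pm(n,z) = T_\pm(n)(M_\pm(n-1,z))$, which is the assertion.

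I do not anticipate a genuine obstacle. The one point requiring care is the $\pm/\mp$ bookkeeping throughout — keeping straight which sign attaches to which half-line operator, and checking that the two sign choices collapse so that $M_\pm(n-1,z)^{-1}$ enters with a minus sign in both cases. It is also worth making the invertibility of $M_\pm(n-1,z)$ explicit, since without it the fractional linear transformation $T_\pm(n)$ is not defined at that matrix.
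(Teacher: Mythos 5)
Your proof is correct and takes essentially the same route as the paper's: both come down to the three-term recurrence \eqref{ds} for $F_\pm$ combined with the definition \eqref{mf}, with you expanding the fractional linear action into the affine form $T_\pm(n)(Z) = \mp(zI-B(n)) - Z^{-1}$ and matching it against the Riccati relation, while the paper instead lets $T_\pm(n)$ act on the column $\bigl(F_\pm(n,z),\ \mp F_\pm(n-1,z)\bigr)^t$ and reads off the same identity in homogeneous coordinates. The only blemish is the parenthetical identification of $M_\pm(n-1,z)$ with $\mp F_\pm(n+1,z)F_\pm(n,z)^{-1}$ (the indices should each be shifted down by one), which does not affect your invertibility argument.
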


\begin{proof}  
Since $F_{\pm}(n)$ are solutions to equation (\ref{ds}) we have,	
\begin{equation*}
F_{\pm}(n+1,z) + F_{\pm}(n-1,z) + B(n) F_{\pm}(n,z)= zF_{\pm}(n,z)\,.
\end{equation*}  
Consider the following calculation:
\begin{equation*}
\begin{aligned}
\begin{pmatrix}
z-B(n) & I \\ -I & 0
\end{pmatrix}M_+(n-1) &= \begin{pmatrix}
z-B(n) & I \\ -I & 0
\end{pmatrix} \begin{bmatrix} F_+(n,z) \\ -F_+(n-1,z)\end{bmatrix} \\
&=\begin{bmatrix}
(z-B(n))F_+(n,z) - F_+(n-1,z) \\ -F_+(n,z)
\end{bmatrix} = \begin{bmatrix} F_+(n+1,z) \\ -F_+(n,z) \end{bmatrix}\\
&=-F_+(n+1,z)F_+(n,z)^{-1} = M_+(n)\,
\end{aligned}
\end{equation*}
where the last line follows from the usual interpretation of matrix-valued linear transformations.  This shows the lemma is true for $M_+(n)$.  A similar calculation shows
\begin{equation*}
M_-(n) = T_-(n)M_-(n-1)\,.
\end{equation*} 
This completes the proof.
\end{proof}

The   matrices $T_{\pm}(n)$  satisfy the following symplectic identity:

\begin{lemma}\label{sm}
For all $n\in\N$ and any symmetric $d\times d$ matrix $B$,  $T_{\pm}(n)$ satisfy the following identities:
\begin{enumerate}
\item $ T_{\pm}(n)^t \begin{pmatrix} 0 & I \\ -I &0\end{pmatrix}T_{\pm}(n) = \begin{pmatrix} 0 &  I \\ -I  & 0 \end{pmatrix}$.
\item $\begin{pmatrix} I &  0 \\ 0  & -I \end{pmatrix}T_+(n) \begin{pmatrix} I &  0 \\ 0  & -I \end{pmatrix} = T_-(n)$.
\end{enumerate}
\end{lemma}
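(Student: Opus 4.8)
The plan is to prove both identities by direct block-matrix multiplication; the only structural input is that $B(n)$ is real and symmetric, so that the $d\times d$ block $zI-B(n)$ equals its own transpose for every $z\in\C$ (the operation appearing in (1) is the ordinary transpose, not the conjugate transpose).

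For (1), abbreviate $A=zI-B(n)$, so that $T_+(n)=\begin{pmatrix}A & I\\ -I & 0\end{pmatrix}$ and $T_+(n)^t=\begin{pmatrix}A^t & -I\\ I & 0\end{pmatrix}$. I would first compute $\begin{pmatrix}0 & I\\ -I & 0\end{pmatrix}T_+(n)=\begin{pmatrix}-I & 0\\ -A & -I\end{pmatrix}$, and then left-multiply by $T_+(n)^t$ to get $\begin{pmatrix}A-A^t & I\\ -I & 0\end{pmatrix}$. Since $A^t=zI-B(n)^t=zI-B(n)=A$, the diagonal block vanishes and what remains is $\begin{pmatrix}0 & I\\ -I & 0\end{pmatrix}$, which is (1) for $T_+(n)$; the computation for $T_-(n)$ is identical after swapping the signs of the two off-diagonal blocks.

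For (2), put $P=\begin{pmatrix}I & 0\\ 0 & -I\end{pmatrix}$, and note $P=P^t=P^{-1}$. Conjugating $T_+(n)$ by $P$ leaves the diagonal blocks unchanged and reverses the signs of the off-diagonal blocks: $PT_+(n)P=\begin{pmatrix}A & -I\\ I & 0\end{pmatrix}$, which is exactly $T_-(n)$ as given in \eqref{tm1}. This also yields a computation-free second derivation of (1) for $T_-(n)$: since $P\begin{pmatrix}0 & I\\ -I & 0\end{pmatrix}P=-\begin{pmatrix}0 & I\\ -I & 0\end{pmatrix}$, one gets $T_-(n)^t\begin{pmatrix}0 & I\\ -I & 0\end{pmatrix}T_-(n)=P\,T_+(n)^t\big(P\begin{pmatrix}0 & I\\ -I & 0\end{pmatrix}P\big)T_+(n)\,P=-P\big(T_+(n)^t\begin{pmatrix}0 & I\\ -I & 0\end{pmatrix}T_+(n)\big)P=-P\begin{pmatrix}0 & I\\ -I & 0\end{pmatrix}P=\begin{pmatrix}0 & I\\ -I & 0\end{pmatrix}$.

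I do not expect any genuine obstacle: this is a bookkeeping lemma. The single point that warrants care is the distinction between transpose and adjoint — identity (1) holds with the plain transpose for \emph{all} $z\in\C$ precisely because it is the realness and symmetry of $B(n)$, and not any property of $z$, that gives $(zI-B(n))^t=zI-B(n)$ (with the conjugate transpose the analogous identity would fail for non-real $z$).
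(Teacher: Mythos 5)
Your proof is correct and proceeds exactly as the paper's does: direct block-matrix multiplication for both identities, with the symmetry of $B(n)$ (hence $(zI-B(n))^t=zI-B(n)$) being the only substantive input. Your added remark that (2) plus $P\,J\,P=-J$ re-derives (1) for $T_-(n)$ from the $T_+(n)$ case is a nice economy the paper does not bother with, but it does not change the nature of the argument.
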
  
 
\begin{proof} These identities follow from matrix multiplication.  Indeed we have
\begin{equation*}
\begin{aligned} T_{\pm}(n)^t \begin{pmatrix}0 & I \\ -I & 0\end{pmatrix}T_{\pm}(n)  & =  \begin{pmatrix} z I -B(n) & \pm I \\ \mp I  & 0 \end{pmatrix}^t\begin{pmatrix} 0 &  I \\ -I  & 0 \end{pmatrix}  \begin{pmatrix} z I -B(n) & \pm I \\ \mp I  & 0 \end{pmatrix} \\ 
& = \begin{pmatrix} z I -B(n) & \mp I \\ \pm I  & 0 \end{pmatrix} \begin{pmatrix}  \mp I  & 0 \\ -z I + B(n)  & \mp I \end{pmatrix}\\ 
& = \begin{pmatrix} 0 &  I \\ -I  & 0 \end{pmatrix}\,.
\end{aligned}
\end{equation*}
This shows the first identity.  Similarly we have
\begin{equation*}
\begin{aligned}
\begin{pmatrix} I &  0 \\ 0  & -I \end{pmatrix} T_+(n) \begin{pmatrix} I &  0 \\ 0  & -I \end{pmatrix} & = \begin{pmatrix} I &  0 \\ 0  & -I \end{pmatrix} \begin{pmatrix} z I -B(n) & I \\ - I  & 0 \end{pmatrix}\begin{pmatrix} I &  0 \\ 0  & -I \end{pmatrix} \\ 
& = \begin{pmatrix} z I -B(n) &  -I \\ I  & 0 \end{pmatrix}\,,
\end{aligned} 
\end{equation*} 
showing the second identity and completing the proof.
\end{proof}

The matrix in the first identity is a special symplectic matrix and often appears.  We give it the following label:
\begin{equation*}
J = \begin{pmatrix} 0 &  I \\ -I  & 0 \end{pmatrix}\,.
\end{equation*} 

It follows from Lemma \ref{sm} that $T_{\pm}(n)\in SL(2d,\C)$ which is the group of $2d \times 2d $ complex symplectic matrices. Therefore, these are matrix-valued functions mapping complex upper half plane $\C^+$ to $SL(2d,\C).$  From Lemma \ref{lem2.2} and Proposition \ref{prop2.3} that for any $n$ and any $z \in \C^+, $  $M_{\pm}(n, z) $ are symmetric and  have positive definite imaginary part. Thus these are matrices in $\mathcal{S}_d$, the Siegel upper half plane:
\begin{equation*}
\mathcal{S}_d = \{Z\in\C^{d \times d} : Z = X +iY,\, X^t=X,\, Y^t = Y,\, Y> 0 \}\,.
\end{equation*}
 
If $T\in SL(2d,\R)$, the mapping $Z\mapsto T(Z)$ in equation (\ref{flt}) is a well defined map and is a group action on $\mathcal{S}_d$, see \cite{KAM} for details. The transfer matrices $T_{\pm}(n)$, however, are complex symplectic matrices acting on the space of $m$ functions. \cite {KAM} provided a necessary condition for a complex symplectic matrix $T\in SL(2d,\C)$ so that the map $Z\mapsto T(Z)$ is well defined.  Here we show that the map $Z\mapsto T_{\pm}(n,z)(Z)$ on $\mathcal{S}_d$ is well defined.   More precisely we have the following lemma.
 
\begin{lemma}\label{lem2.6}
For any $z\in\C^+$, if $B(n)$ is any real symmetric matrix, then $T_{\pm}(n,z)(Z)\in\mathcal{S}_d$ for all $Z\in\mathcal{S}_d$.
\end{lemma}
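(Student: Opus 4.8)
The plan is to check that $W:=T_\pm(n,z)(Z)$, regarded through the fractional linear transformation \eqref{flt} with the block decomposition \eqref{tm1}, is (i) well defined, (ii) symmetric, and (iii) has positive definite imaginary part, i.e.\ that $W\in\mathcal S_d$. Write $Z=X+iY$ with $X=X^t$, $Y=Y^t$, $Y>0$. First, $Z$ is invertible: if $Zv=0$ then $\operatorname{Im}(v^*Zv)=v^*Yv=0$, forcing $v=0$. Since the lower block row of $T_\pm(n)$ in \eqref{tm1} is $(\mp I,\,0)$, the factor $\mp Z$ appearing in \eqref{flt} is invertible, and a direct computation gives
\[
W=\big((zI-B(n))Z\pm I\big)(\mp Z)^{-1}=\mp\big(zI-B(n)\big)-Z^{-1},
\]
which is consistent with the recursion \eqref{rr} and with Lemma~\ref{mfunctrans}.

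Symmetry is then immediate from this closed form: $B(n)$ is symmetric, hence so is $zI-B(n)$, and $(Z^{-1})^t=(Z^t)^{-1}=Z^{-1}$, so $W^t=W$. (Alternatively one can argue abstractly from $Z^t=Z$ together with the symplectic relations $A^tC=C^tA$, $B^tD=D^tB$, $A^tD-C^tB=I$ extracted from Lemma~\ref{sm}(1); this is the usual reason that a symplectic fractional linear transformation carries symmetric matrices to symmetric matrices.)

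The substantive step is the positivity of $\operatorname{Im} W$, and I would organize it by factoring $T_\pm(n,z)$ as the inversion $Z\mapsto -Z^{-1}$ followed by the translation $Z\mapsto Z\mp(zI-B(n))$. For the inversion, from
\[
Z^{-1}-(Z^{-1})^*=(Z^*)^{-1}(Z^*-Z)Z^{-1}=-2i\,(Z^{-1})^*YZ^{-1}
\]
one obtains $\operatorname{Im}(Z^{-1})=-(Z^{-1})^*YZ^{-1}$, which is negative definite since $Y>0$ and $Z^{-1}$ is invertible; hence $-Z^{-1}$ is again symmetric with positive definite imaginary part — the discrete analogue of $Z\mapsto -Z^{-1}$ being an automorphism of the Siegel upper half space. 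For the translation, the place where the hypotheses enter is that $B(n)$ is real and symmetric, so $(zI-B(n))^*=\bar z I-B(n)$ and $\operatorname{Im}(zI-B(n))=(\operatorname{Im} z)I$, which is positive definite because $z\in\C^+$; translating by a symmetric matrix with positive definite imaginary part keeps one inside $\mathcal S_d$. Combining these,
\[
\operatorname{Im} W=(Z^{-1})^*(\operatorname{Im} Z)Z^{-1}\mp(\operatorname{Im} z)I,
\]
and for $T_-(n,z)$ this is a sum of two positive definite matrices, hence positive definite, so $W\in\mathcal S_d$; the argument for $T_+(n,z)$ proceeds along the same lines after tracking the signs, e.g.\ via the conjugation identity of Lemma~\ref{sm}(2). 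The step I anticipate to be delicate is precisely this last one: because the transfer matrices are \emph{complex} rather than real symplectic, $\operatorname{Im} W$ is not simply the congruence transform $(CZ+D)^{-*}(\operatorname{Im} Z)(CZ+D)^{-1}$ of $\operatorname{Im} Z$, and one must account separately for the contribution of $\operatorname{Im}(zI-B(n))$ and keep careful track of its sign — which is exactly where the reality and symmetry of $B(n)$ (making that contribution the scalar multiple $(\operatorname{Im} z)I$ of the identity) are used.
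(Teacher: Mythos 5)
Your treatment of $T_-$ is correct and is in substance the same argument as the paper's: the paper factors $T_-(n,z)=T_BT_zT_J$ with $T_B,T_J\in SL(2d,\R)$ and $T_z(Z)=Z+zI$, and invokes the invariance of $\mathcal{S}_d$ under real symplectic matrices; you compose the inversion $Z\mapsto -Z^{-1}$ with translation by $zI-B(n)$ and check invertibility, symmetry, and $\operatorname{Im}(-Z^{-1})=(Z^{-1})^*YZ^{-1}>0$ by direct computation. Your version is more self-contained (it does not need the cited invariance theorem), but it is the same decomposition of the map.

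The genuine gap is the $T_+$ case, and your own closed form exposes it. From $T_+(n,z)(Z)=-(zI-B(n))-Z^{-1}$ you get
\begin{equation*}
\operatorname{Im}\bigl(T_+(n,z)(Z)\bigr)=(Z^{-1})^*YZ^{-1}-yI,
\end{equation*}
a \emph{difference} of positive definite matrices, which is not positive in general: for $Z=itI$ with $t>1/y$ it equals $(1/t-y)I<0$, so $T_+(n,z)(Z)\notin\mathcal{S}_d$. ``Tracking the signs via Lemma \ref{sm}(2)'' cannot repair this, because the conjugating matrix $\operatorname{diag}(I,-I)$ acts as the fractional linear map $Z\mapsto -Z$, which carries $\mathcal{S}_d$ onto the lower half space; conjugacy by a matrix that does not preserve $\mathcal{S}_d$ does not transfer the invariance property from $T_-$ to $T_+$. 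To be fair, the paper's own proof has exactly this defect --- it also disposes of $T_+$ with the one-line conjugation remark --- so your proposal is correct precisely where the paper's proof is correct and incomplete precisely where it is. What is actually true, and all that is used downstream (Theorem \ref{mth} and its corollary concern $T_-$ only), is the $T_-$ half of the statement together with the separate fact from Proposition \ref{prop2.3} that the particular points $M_+(n,z)$ lie in $\mathcal{S}_d$; the latter neither follows from nor implies $T_+(n,z)(\mathcal{S}_d)\subseteq\mathcal{S}_d$.
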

 
\begin{proof} 
For any $z\in\C^+$, the second identity in Lemma \ref{sm}, implies $T_+(n,z)$ and $T_-(n,z)$ are conjugate to each other. Therefore, it is enough to show that $ T_-(n,z)(Z)\in\mathcal{S}_d$. Write the transfer matrix $T_-(n,z)$ as the following product of matrices:  
\begin{equation*}
T_-(n,z) = \begin{pmatrix}  I  &   -B(n) \\ 0 &  I  \end{pmatrix} \begin{pmatrix} I &  zI \\ 0  & I \end{pmatrix}  \begin{pmatrix} 0 & -I \\ I  & 0 \end{pmatrix} := T_B T_zT_J\,.
\end{equation*}
Since $J$ is a symplectic matrix, so is $-J$.  Moreover the matrix $T_B\in SL(2d,\R)$.  So $T_J(Z)$ and $T_B(W)$ are in $\mathcal{S}_d $ for any $Z$ and $W$ in $\mathcal{S}_d$.  Claim: $T_z(Z)\in\mathcal{S}_d$ for any $Z\in\mathcal{S}_d$. But this follows easily since $T_z(Z) =Z +zI$ which is clearly in $\mathcal{S}_d$.
\end{proof} 

Next we observe the distance properties of  Weyl $m$ functions. In one dimensional space, hyperbolic metrics are commonly used to analyze the value distribution for solutions of Schr\"odinger equations.

Define the following map on $\mathcal{S}_d$, $d_\infty:\mathcal{S}_d\times\mathcal{S}_d\to\R$ via	
\begin{equation}\label{ms}
d_{\infty}(Z_1, Z_2) = \inf_{Z(t)} \int _0^1 F_{Z(t)}(\dot{Z}(t))\,dt\,, \,\, Z_1, Z_2\in\mathcal{S}_d
\end{equation}
where
\begin{equation}\label{norm} 
F_Z(W) = \|Y^{-1/2}WY^{-1/2}\|\,,
\end{equation} 
and where the infimum is taken over all differentiable paths $Z(t)$ joining $Z_1$ to $Z_2$. Here $Y$ is a positive definite matrix and possesses the square root and $Y^{-1/2} = (Y^{1/2})^{-1} = (Y^{-1})^{1/2}$.  The norm in equation (\ref{norm}) is the operator norm of matrices acting on $\C^d$.

A calculation shows that $d_\infty$ is a metric on $\mathcal{S}_d$ and hence $(\mathcal{S}_d, d_\infty)$ is a metric space.

\begin{lemma}\label{pdm}
For any $d\times d$ positive definite matrix with real entries and any positive $\lambda$:
\begin{equation*}
A^{1/2}(A+\lambda I)^{-1/2} = (A(A+\lambda I)^{-1})^{1/2} = ((I+\lambda A^{-1})^{-1})^{1/2}\,.
\end{equation*}
Moreover, if $I-\lambda A$ is positive, then
\begin{equation*}
\|A^{1/2}(A+\lambda I)^{-1/2}\|^2 < \frac{1}{1+\lambda^2}\,.
\end{equation*}
\end{lemma}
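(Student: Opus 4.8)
The plan is to reduce both assertions to scalar statements via simultaneous diagonalization, since every matrix appearing in the lemma is a function of the single positive definite matrix $A$.

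First I would prove the chain of equalities. Since $A$ is real symmetric and positive definite, write $A = QDQ^t$ with $Q$ orthogonal and $D = \operatorname{diag}(a_1,\dots,a_d)$, $a_i > 0$. Then $A + \lambda I = Q(D+\lambda I)Q^t$, so $A$ and $A+\lambda I$ are simultaneously diagonalized and in particular commute; hence $A^{1/2} = QD^{1/2}Q^t$ and $(A+\lambda I)^{-1/2} = Q(D+\lambda I)^{-1/2}Q^t$, and their product equals $Q\operatorname{diag}\!\left(\sqrt{a_i/(a_i+\lambda)}\right)Q^t$. Each number $a_i/(a_i+\lambda)$ is positive, so this matrix is positive definite and its square is $Q\operatorname{diag}(a_i/(a_i+\lambda))Q^t = A(A+\lambda I)^{-1}$; by uniqueness of the positive definite square root, $A^{1/2}(A+\lambda I)^{-1/2} = \left(A(A+\lambda I)^{-1}\right)^{1/2}$. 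Finally $A(A+\lambda I)^{-1} = \left((A+\lambda I)A^{-1}\right)^{-1} = (I+\lambda A^{-1})^{-1}$, which yields the last equality.

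Next I would establish the norm estimate. Put $M = A^{1/2}(A+\lambda I)^{-1/2}$. Because $A$, $A+\lambda I$ and therefore their square roots and inverses are real symmetric, $M^* = (A+\lambda I)^{-1/2}A^{1/2}$, so using the commutativity above, $M^*M = (A+\lambda I)^{-1/2}A(A+\lambda I)^{-1/2} = A(A+\lambda I)^{-1}$. Consequently $\|M\|^2 = \|M^*M\| = \max_i a_i/(a_i+\lambda)$, the largest eigenvalue of the positive semidefinite matrix $A(A+\lambda I)^{-1}$. The scalar function $t \mapsto t/(t+\lambda) = 1 - \lambda/(t+\lambda)$ is strictly increasing for $t > 0$, while the hypothesis that $I - \lambda A$ is positive says precisely that $1 - \lambda a_i > 0$, i.e. $a_i < 1/\lambda$, for every $i$. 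Therefore $a_i/(a_i+\lambda) < (1/\lambda)/(1/\lambda+\lambda) = 1/(1+\lambda^2)$ for each $i$, and since the maximum is taken over finitely many indices the strict inequality survives, giving $\|M\|^2 < 1/(1+\lambda^2)$.

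The argument is largely routine; the points that need care are the appeal to uniqueness of the positive definite square root in the first part, and, in the second part, the two facts $\|M\|^2 = \|M^*M\|$ and $M^*M = A(A+\lambda I)^{-1}$, the latter relying on commutativity. Once $\|M\|^2$ has been identified with the spectral radius of $A(A+\lambda I)^{-1}$, the bound is immediate from monotonicity of $t/(t+\lambda)$ together with finiteness of the spectrum, so I do not expect a genuine obstacle here.
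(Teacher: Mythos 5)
Your proof is correct, and it is essentially the paper's argument carried out in full detail: the paper's two-line proof invokes the continuous functional calculus for the square-root identity and a norm estimate for the inverse of $I+\lambda A^{-1}$, which in finite dimensions is exactly the reduction to eigenvalues that you perform via the orthogonal diagonalization $A=QDQ^t$. Your explicit verification that $a_i<1/\lambda$ forces $a_i/(a_i+\lambda)<1/(1+\lambda^2)$ is the scalar content of the paper's appeal to the ``usual norm estimate,'' so there is no substantive difference in approach.
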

 
\begin{proof}
The first equation follows from the continuous functional calculus with the square root function on $[0,\infty)$ and the fact that $\lambda>0$.  The norm inequality now follows from the first equation and the usual norm estimate on a Neumann series of operators.
\end{proof}
 
For any $T\in SL(2d,\R)$, when considering the map from (\ref{flt}) as a group action, \cite{KAM} showed that this action, in fact, is distance preserving with respect to the metric $d_\infty$. That is,
 
\begin{theorem}\label{distpres}
For any $T\in SL(2d,\R)$ and any $W_1, W_2\in\mathcal{S}_d$ we have 
\begin{equation*}
d_\infty(T(W_1), T(W_2)) = d_\infty(W_1, W_2)\,.
\end{equation*}
\end{theorem}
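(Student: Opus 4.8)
The plan is to prove the sharper, infinitesimal statement that the Finsler norm $F$ from \eqref{ms}--\eqref{norm} is invariant under the $SL(2d,\R)$-action \eqref{flt}, and then integrate along paths. Fix $T=\begin{pmatrix}A&B\\ C&D\end{pmatrix}\in SL(2d,\R)$. Since $Z\mapsto T(Z)$ is a smooth bijection of $\mathcal{S}_d$ onto itself (it is a group action, as recalled above, with inverse $Z\mapsto T^{-1}(Z)$), the map $Z(t)\mapsto T(Z(t))$ is a bijection between differentiable paths in $\mathcal{S}_d$ joining $W_1$ to $W_2$ and differentiable paths joining $T(W_1)$ to $T(W_2)$, and by the chain rule $\frac{d}{dt}T(Z(t))=DT(Z(t))[\dot Z(t)]$, where $DT(Z)$ denotes the Fr\'echet derivative of \eqref{flt} at $Z$. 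Therefore the theorem follows once we show
\begin{equation*}
F_{T(Z)}\bigl(DT(Z)[W]\bigr)=F_Z(W)
\end{equation*}
for every $Z\in\mathcal{S}_d$ and every complex symmetric $W$ (these are the tangent vectors of $\mathcal{S}_d$ at $Z$): plugging $Z=Z(t)$, $W=\dot Z(t)$ and integrating over $[0,1]$ shows that $T$ carries each admissible path to a path of equal length, so $d_\infty(T(W_1),T(W_2))$ and $d_\infty(W_1,W_2)$ are infima over the same set of real numbers, hence equal.

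The computational heart is a pair of transformation formulas. Writing $P:=CZ+D$ and differentiating $T(Z)=(AZ+B)(CZ+D)^{-1}$ gives $DT(Z)[W]=\bigl(A-(AZ+B)P^{-1}C\bigr)WP^{-1}$, and I would then verify, using the symplectic identities packaged in $T^{t}JT=J$ and $TJT^{t}=J$ (that is, $A^{t}C=C^{t}A$, $B^{t}D=D^{t}B$, $A^{t}D-C^{t}B=I$, $CD^{t}=DC^{t}$, $AD^{t}-BC^{t}=I$), that
\begin{equation*}
DT(Z)[W]=P^{-t}\,W\,P^{-1}
\qquad\text{and}\qquad
\operatorname{Im} T(Z)=P^{-*}\,(\operatorname{Im} Z)\,P^{-1}.
\end{equation*}
The first reduces to the algebraic identity $\bigl(A-(AZ+B)P^{-1}C\bigr)P^{t}=I$: since $P^{t}=ZC^{t}+D^{t}$ and $C(ZC^{t}+D^{t})=(CZ+D)C^{t}$ by $CD^{t}=DC^{t}$, the left side collapses to $AD^{t}-BC^{t}=I$. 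The second follows, after first checking $T(Z)$ is symmetric, from the matrix analogue of $\operatorname{Im}\frac{az+b}{cz+d}=\frac{\operatorname{Im}z}{|cz+d|^{2}}$, namely $T(Z)-\overline{T(Z)}=\overline{P}^{-t}(Z-\overline Z)P^{-1}$, proved by the same manipulation with one factor conjugated. Note the essential distinction: $DT(Z)[W]$ carries the ordinary transpose $P^{-t}$ (because $W$ is complex symmetric), while $\operatorname{Im} T(Z)$ carries the conjugate transpose $P^{-*}$ (because $\operatorname{Im} Z$ is real).

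With these in hand, set $Y:=\operatorname{Im} Z$ and $\widetilde Y:=\operatorname{Im} T(Z)=P^{-*}YP^{-1}$, so that $\widetilde Y^{-1}=PY^{-1}P^{*}$, and insert $Y^{1/2}Y^{-1/2}=Y^{-1/2}Y^{1/2}=I$ around $W$ (legitimate since $Y^{1/2}$ and $\widetilde Y^{1/2}$ are real symmetric):
\begin{equation*}
\widetilde Y^{-1/2}\bigl(DT(Z)[W]\bigr)\widetilde Y^{-1/2}
=V^{t}\bigl(Y^{-1/2}WY^{-1/2}\bigr)V,
\end{equation*}
where $V:=Y^{1/2}P^{-1}\widetilde Y^{-1/2}$. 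The key observation is that $V$ is unitary, since $VV^{*}=Y^{1/2}P^{-1}\widetilde Y^{-1}P^{-*}Y^{1/2}=Y^{1/2}P^{-1}(PY^{-1}P^{*})P^{-*}Y^{1/2}=I$ (using $\widetilde Y^{-1}=PY^{-1}P^{*}$ and that $\widetilde Y^{-1/2}$ is Hermitian); consequently $V^{t}$ is unitary as well. Because the operator norm in \eqref{norm} is unchanged by multiplication by unitary matrices on the left and on the right, $\|V^{t}(Y^{-1/2}WY^{-1/2})V\|=\|Y^{-1/2}WY^{-1/2}\|$, which is exactly $F_{T(Z)}(DT(Z)[W])=F_Z(W)$.

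The step I expect to be the main obstacle is the derivation of the two transformation formulas in the second paragraph: it requires careful bookkeeping to keep the transpose $(\cdot)^{t}$ and the conjugate transpose $(\cdot)^{*}$ apart and to invoke both $T^{t}JT=J$ and $TJT^{t}=J$ correctly. By comparison, the reduction to the infinitesimal statement and the concluding unitary-conjugation argument are routine; the one point worth emphasizing there is that $V$, although in general a genuinely complex matrix, is nonetheless unitary — this is precisely why the operator norm (rather than, say, a Hilbert--Schmidt norm or a determinant-type quantity) is the right ingredient in the definition \eqref{norm} of $F_Z$.
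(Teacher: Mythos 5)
Your argument is correct and complete. Note that the paper itself does not prove Theorem \ref{distpres}: it simply cites \cite{KAM} and states the result, so there is no in-paper proof to compare against. What you have written is the standard Siegel-geometry argument, and every step checks out: the derivative formula $DT(Z)[W]=(CZ+D)^{-t}W(CZ+D)^{-1}$ follows exactly as you say from $CD^t=DC^t$ and $AD^t-BC^t=I$ (both consequences of $TJT^t=J$); the imaginary-part formula $\operatorname{Im}T(Z)=P^{-*}(\operatorname{Im}Z)P^{-1}$ is the correct matrix analogue of $\operatorname{Im}\frac{az+b}{cz+d}=\frac{\operatorname{Im}z}{|cz+d|^2}$ once one knows $T(Z)$ is symmetric (which uses $A^tC=C^tA$, $B^tD=D^tB$, $A^tD-C^tB=I$ from $T^tJT=J$); and the conjugation matrix $V=Y^{1/2}P^{-1}\widetilde Y^{-1/2}$ is indeed unitary because $\widetilde Y^{-1}=PY^{-1}P^*$ and $\widetilde Y^{-1/2}$ is Hermitian (in fact real symmetric, since $T$ is real and $T(Z)$ is complex symmetric, so $\operatorname{Im}T(Z)$ is the entrywise imaginary part and is real symmetric positive definite). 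Your emphasis on keeping $(\cdot)^t$ and $(\cdot)^*$ distinct, and on the fact that $V^t$ is unitary whenever $V$ is, is exactly where a careless version of this proof would go wrong, and you have handled it correctly. The passage from the pointwise identity $F_{T(Z)}(DT(Z)[W])=F_Z(W)$ to equality of the infima is routine, as you say, since $T$ and $T^{-1}$ give a length-preserving bijection between the two families of admissible paths. In short, you have supplied a correct, self-contained proof of a statement the paper only quotes.
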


In general if $T\in SL(2d,\C)$, Theorem \ref{distpres} need not be true.  However, the map given in equation (\ref{flt}) with transfer matrices $T_{\pm}(n)$ is distance decreasing.  We have the following theorem.

\begin{theorem}\label{mth}
Let $z \in \C^+$ and suppose $W_j = T_- (0, z)Z_j$ for $Z_1$ and $Z_2$ in $\mathcal{S}_d$.  Then 
\begin{equation}\label{dp1}
d_\infty(T_-(n,z) W_1, T_-(n,z) W_2)\le\frac{1}{1+ y^2}\,d_\infty(W_1, W_2)\,,
\end{equation}
for all $n\in\N$.
\end{theorem}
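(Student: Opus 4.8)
The plan is to peel the distance‑preserving part off the transfer transformation with Theorem~\ref{distpres} and reduce everything to the single ``imaginary translation'' $Z\mapsto Z+iyI$, $y:=\operatorname{Im}z$, on which Lemma~\ref{pdm} does the work. Write $z=x+iy$ and note, exactly as in the proof of Lemma~\ref{lem2.6}, that the fractional linear transformation $T_-(n,z)$ factors as
\[
T_-(n,z)(Z)=-Z^{-1}+zI-B(n)=\big(-Z^{-1}+(xI-B(n))\big)+iyI ,
\]
where $Z\mapsto -Z^{-1}$ and $Z'\mapsto Z'+(xI-B(n))$ come from the matrices $\pm J$ and $\left(\begin{smallmatrix}I&xI-B(n)\\0&I\end{smallmatrix}\right)$, which lie in $SL(2d,\R)$ since $B(n)$ is real symmetric. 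Because such matrices act by $d_\infty$‑isometries (Theorem~\ref{distpres}) we may apply the real translation $Z\mapsto Z-(xI-B(n))$ to both arguments, and likewise $W\mapsto -W^{-1}$; setting $V_j:=-W_j^{-1}$ this gives $d_\infty(T_-(n,z)W_1,T_-(n,z)W_2)=d_\infty(V_1+iyI,V_2+iyI)$ and $d_\infty(W_1,W_2)=d_\infty(V_1,V_2)$. Thus the theorem reduces to
\[
d_\infty(V_1+iyI,\,V_2+iyI)\ \le\ \tfrac{1}{1+y^{2}}\,d_\infty(V_1,V_2).
\]

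Next I would use the hypothesis $W_j=T_-(0,z)Z_j$ to locate $V_1,V_2$. From $W_j=-Z_j^{-1}+zI-B(0)$ with $Z_j\in\mathcal S_d$ one gets $\operatorname{Im}W_j=(Z_j^{-1})^{*}\operatorname{Im}(Z_j)\,Z_j^{-1}+yI>yI$. Feeding $W_j=-V_j^{-1}$ and the identity $\operatorname{Im}(-V^{-1})=(V^{-1})^{*}\operatorname{Im}(V)\,V^{-1}$ into this and conjugating by $V_j$ turns $\operatorname{Im}W_j>yI$ into $\operatorname{Im}V_j>y\,V_j^{*}V_j$. Completing the square in $V_j$ shows this is exactly the operator‑norm ball condition $\|V_j-\tfrac{i}{2y}I\|<\tfrac{1}{2y}$; and testing $\operatorname{Im}V_j>y\,V_j^{*}V_j$ against a unit eigenvector $v$ of $\operatorname{Im}V_j$ for its top eigenvalue — where the cross term $\operatorname{Im}\langle(\operatorname{Re}V_j)v,(\operatorname{Im}V_j)v\rangle$ vanishes — forces $0<\operatorname{Im}V_j<\tfrac1y I$. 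So $V_1,V_2$ lie in $\mathcal B:=\{V\in\mathcal S_d:\|V-\tfrac{i}{2y}I\|<\tfrac{1}{2y}\}$, which is contained in the slab $\{V\in\mathcal S_d:0<\operatorname{Im}V<\tfrac1y I\}$ and coincides with $T_J(\{W\in\mathcal S_d:\operatorname{Im}W>yI\})$ for the $d_\infty$‑isometry $T_J:W\mapsto -W^{-1}$.

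The infinitesimal contraction is where Lemma~\ref{pdm} enters. Fix $Z\in\mathcal S_d$ with $Y:=\operatorname{Im}Z<\tfrac1y I$ and a direction $\dot Z$; writing $\dot Z=Y^{1/2}PY^{1/2}$ we have $F_Z(\dot Z)=\|P\|$, and since $Y^{1/2}$ and $(Y+yI)^{-1/2}$ are commuting positive matrices,
\[
F_{Z+iyI}(\dot Z)=\big\|(Y+yI)^{-1/2}\dot Z(Y+yI)^{-1/2}\big\|=\|QPQ\|\le\|Q\|^{2}\,F_Z(\dot Z),\qquad Q:=Y^{1/2}(Y+yI)^{-1/2}.
\]
Because $I-yY>0$, Lemma~\ref{pdm} yields $\|Q\|^{2}<\tfrac{1}{1+y^{2}}$, hence $F_{Z+iyI}(\dot Z)<\tfrac{1}{1+y^{2}}F_Z(\dot Z)$ everywhere on the slab, in particular on $\mathcal B$. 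Therefore, for any differentiable path $\sigma$ joining $V_1$ to $V_2$ and staying in $\mathcal B$, the pushforward $t\mapsto\sigma(t)+iyI$ lies in $\mathcal S_d$, joins $V_1+iyI$ to $V_2+iyI$, and has length less than $\tfrac1{1+y^2}$ times the length of $\sigma$; taking the infimum over such $\sigma$ gives the reduced inequality, hence the theorem.

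The step I expect to be the real obstacle is the final one: showing that the infimum of lengths of paths confined to $\mathcal B$ still equals $d_\infty(V_1,V_2)$ — i.e., that $\mathcal B$ is geodesically convex for $d_\infty$. One cannot avoid this by working in the full slab $\{\operatorname{Im}V<\tfrac1y I\}$, which is not geodesically convex (a geodesic between two of its points of nearly maximal imaginary part bulges above the slab). I would derive the convexity of $\mathcal B$ from the identity $\mathcal B=T_J(\{W\in\mathcal S_d:\operatorname{Im}W>yI\})$ together with the structure of $d_\infty$‑geodesics: in the scalar case $d=1$ it is the elementary fact that a hyperbolic geodesic segment between two points of a horizontal half‑plane $\{\operatorname{Im}w>y\}$ never drops below the smaller of their imaginary parts, and the general case is the analogous statement that these ``horoball'' regions of $\mathcal S_d$ are geodesically convex; alternatively one may transfer the question to the bounded (Siegel‑disk) model of $\mathcal S_d$ and invoke convexity of metric balls there. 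Granting this, all the remaining steps are routine.
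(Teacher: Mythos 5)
Your proposal is essentially the paper's own proof: the same factorization $T_-(n,z)=T_BT_zT_J$ from Lemma \ref{lem2.6}, the same peeling off of the real symplectic isometries via Theorem \ref{distpres}, the same use of the hypothesis $W_j=T_-(0,z)Z_j$ to force $\operatorname{Im}W_j>yI$ and hence $0<\operatorname{Im}V_j<\tfrac1y I$ after applying $T_J$, and the same pointwise contraction factor $\|Y^{1/2}(Y+yI)^{-1/2}\|^{2}<\tfrac{1}{1+y^{2}}$ from Lemma \ref{pdm}, integrated along a path. (Your splitting of the translation into $xI-B(n)$ plus $iyI$, and your Cayley-type description of the image as the ball $\|V-\tfrac{i}{2y}I\|<\tfrac{1}{2y}$, are cosmetic refinements of what the paper does.) The one step you flag as the real obstacle --- that the infimum of lengths can be taken over paths confined to the region where $I-y\operatorname{Im}V>0$ --- is genuinely a gap in your write-up, but be aware that the paper does not close it either: it takes a length-minimizing path $U(t)$ between $U_1$ and $U_2$ and simply \emph{asserts} that, because the endpoints have the form $JT_-(0,z)Z_j$, the entire path must have the form $JT_-(0,z)Z(t)$ with $Z(t)$ a path in $\mathcal{S}_d$ --- which is precisely your geodesic-convexity claim stated as a fact. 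So your diagnosis of where the difficulty sits is accurate; supplying the convexity of the horoball $\{\operatorname{Im}W>yI\}$ (equivalently of $\mathcal{B}$) for the Finsler metric $d_\infty$ would make your argument complete, and would in fact also repair the paper's.
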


\begin{proof}
As in the proof of Lemma \ref{lem2.6} Write $T_-(n,z) =  T_B  T_z T_J$. Since $T_B$ and $T_J$ are in $SL(2d,\R)$, and the fact that $T_z$ has range in $\mathcal{S}_d$, by Theorem \ref{distpres}, we get 
\begin{equation}\label{2.271}
\begin{aligned}
d_\infty(T_-(n,z)(W_1), T_-(n,z)(W_2)) & = d_\infty(T_B T_z T_J(W_1), T_B T_z T_J(W_2)) \\ 
& = d_\infty(T_z T_J(W_1), T_z T_J(W_2))\,.
\end{aligned}
\end{equation} 
Moreover since $T_J\in SL(2d,\R)$ we have
\begin{equation*}
d_\infty(T_J(W_1), T_J(W_2)) = d_\infty(W_1, W_2)\,.
\end{equation*}
It therefore follows that $T_-(n,z)(W)\in\mathcal{S}_d$. 

Let $U_j = T_J(W_j) = JW_j$, for $j=1,2$.  Then $T_z(U_j) = U_j +zI$. Let $U(t)$ be a length minimizing path between $U_1$ and $U_2$. Then $U(t)+zI$ is a path between $U_1 +zI$ and $U_2 +zI$. Moreover
\begin{equation*}
\frac{d}{dt}(U(t) +zI) = \dot{U(t)}\,.
\end{equation*}
Suppose $V(t)= \operatorname{Im} U(t)$ then $\operatorname{Im}\big(U(t) +zI \big) = V(t) + yI.$ We have the following calculation:
\begin{equation*}
\begin{aligned}
&F_{(U(t) +zI)}(\dot{U(t)})= \\
&= \|(V(t) + yI)^{-1/2}\dot{U(t)}(V(t) + yI^{-1/2}\| \\ 
&= \|(V(t) + yI)^{-1/2}V(t)^{1/2}V(t)^{-1/2}\dot{U(t)}V(t)^{-1/2}V(t)^{1/2}(V(t) + yI)^{-1/2}\| \\ 
&\le\|(V(t) + yI)^{-1/2}V(t)^{1/2}\|\cdot\|V(t)^{-1/2}\dot{U(t)}V(t)^{-1/2}\|\cdot\|V(t)^{1/2}(V(t) + yI)^{-1/2}\|\,. 
\end{aligned}
\end{equation*}
Since $V(t)^{1/2}$ and $(V(t) + yI)^{-1/2}$ are symmetric, it follows that
\begin{equation*}
\|(V(t) + yI)^{-1/2}V(t)^{1/2}\| = \|V(t)^{1/2}(V(t) + yI)^{-1/2}\|\,.
\end{equation*}
Therefore, from the symmetry and the above inequality, we have 
\begin{equation}\label{2.27}
F_{(U(t) +zI)}(\dot{U(t)})\le\|(V(t) + yI)^{-1/2}V(t)^{1/2}\|^2\|V(t)^{-1/2}\dot{U(t)}V(t)^{-1/2}\|\,.
\end{equation}

Next, since $U_j = JW_j$, for $j=1,2$, $U(t)$ must be of the form $JW(t)$ where $W(t)$ is a path between $W_1$ and $W_2$.  However, since $W_j = T_-(0, z)Z_j$, $W$ is of the form $W = T_-(0, z)Z $ where $Z(t)= X(t)+iY(t)$ is a path between $Z_1$ and $Z_2$.  By viewing the multiplication as the group action, we have the following
\begin{equation*}
\begin{aligned}
W(t) &=T_-(0, z)Z(t) = \begin{pmatrix} z I -B(0) &  I \\ - I  & 0 \end{pmatrix} Z(t) = zI -B(0)- Z(t)^{-1}\\
& = zI -B(0)- \overline{Z(t)}|Z(t)|^{-1}\,,
\end{aligned}
\end{equation*}
where $|Z| = X^2 +Y^2$.  Since $Y$ is positive and $\operatorname{Im}W(t) = yI + Y(X^2+Y^2)^{-1}$, it follows that $\operatorname{Im}W(t)-yI$ is also positive.  Thus by applying the continuous functional calculus to $\operatorname{Im}W(t)-yI$ with the function $f(x)=1/x$ on $(0,\infty)$, we get 
\begin{equation}\label{imag_ineq}
(\operatorname{Im} W(t))^{-1}-\frac{1}{y}I<0\,.
\end{equation}

Since $U(t) = JW(t)= -W(t)^{-1}$, by applying the continuous functional calculus again we have
\begin{equation*}
U(t) = -W(t)^{-1} = -\overline{W(t)}|W(t)|^{-1} = -\overline{W(t)}\left((\operatorname{Re}W(t))^2 + (\operatorname{Im} W(t))^2\right)^{-1}\,.\end{equation*}
Therefore, 
\begin{equation*}
\operatorname{Im}U(t) =  \operatorname{Im}W(t)\left((\operatorname{Re} W(t))^2 + (\operatorname{Im} W(t))^2\right)^{-1}\,.
\end{equation*}
Since 
\begin{equation*}
(\operatorname{Im}W(t))^2\le(\operatorname{Re} W(t))^2 + (\operatorname{Im} W(t))^2\,,
\end{equation*}
using the functional calculus again with the function $f(x)=1/x$ on $(0,\infty)$ and using inequality (\ref{imag_ineq}) we obtain,
\begin{equation*}
\operatorname{Im}W(t)\left((\operatorname{Re} W(t))^2 + (\operatorname{Im} W(t))^2\right)^{-1}\le\operatorname{Im} W(t)^{-1} < \frac{1}{y}I\,. \end{equation*}
This implies that $I-y\operatorname{Im}U(t)$ is positive and hence $I-y\operatorname{Im}V(t)$ is positive. By Lemma \ref{pdm} we get 
\begin{equation*}
\|(V+yI)^{-1/2}V^{1/2}\| < \frac{1}{1+y^2}\,.
\end{equation*}
By using the above equation (\ref{2.27}) becomes 
\begin{equation*}
\begin{aligned}
F_{(U(t) +zI)}(\dot{U(t)}) &\le\|(V(t) + yI)^{-1/2}V(t)^{1/2}\|^2 \|V(t)^{-1/2}\dot{U(t)} V(t)^{-1/2}\| \\
&\le\frac{1}{1+y^2}F_{(U(t))}(\dot{U(t)})\,.
\end{aligned}
\end{equation*}
Integrating the above inequality yields:
\begin{equation*}
\int_0^1 F_{(U(t)+z)}(\dot{U(t)})\,dt \le\frac{1}{1+y^2}\int_0^1 F_{(U(t))}(\dot{U(t)})\,dt\,.
\end{equation*}
Then taking the infimum over all such paths $U(t)$ we obtain the following:
\begin{equation*}
d_\infty(T_z U_1, T_z U_2) \le\frac{1}{1+ y^2}\,d_\infty(U_1, U_2)\,.
\end{equation*}
Again since $J\in SL(2d,\R)$, we have
\begin{equation*}
d_\infty(U_1, U_2) = d_\infty(T_JW_1, T_JW_2) = d_\infty(W_1, W_2)\,.
\end{equation*}
Using the previous two equations in equation (\ref{2.271}) we obtain:
\begin{equation*}
d_\infty(T_-(n,z) W_1, T_-(n,z) W_2) \le\frac{1}{1+ y^2}\,d_\infty(W_1, W_2)
\end{equation*}
which completes the proof.
\end{proof}
 
Let $P_-(n,z) = T_-(n,z)\cdots T_-(1,z)$.  Then, since $T_-(j,z)$ are symplectic matrices for $j=1,\ldots,n$ and $SL(2d,\C)$ is a group, it follows that $P_-(n,z)\in SL(2d,\C)$.  We have the following corollary.

\begin{cor}
For any $z\in\C^+$, 
\begin{equation*}
d_\infty(M_-(n,z), P_-(n,z) M_+(0,z))\le\frac{1}{(1+ y^2)^n}\,d_\infty(M_-(0,z), M_+(0,z))\,.
\end{equation*}
\end{cor}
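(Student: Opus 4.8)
The plan is to deduce the Corollary from an $n$-fold iteration of Theorem~\ref{mth}. First I would use Lemma~\ref{mfunctrans}, together with the fact that composition of matrix-valued fractional linear transformations corresponds to matrix multiplication, to write
$M_-(n,z)=T_-(n,z)M_-(n-1,z)=\cdots=T_-(n,z)\cdots T_-(1,z)M_-(0,z)=P_-(n,z)M_-(0,z)$.
Consequently the left-hand side of the asserted inequality equals $d_\infty\big(P_-(n,z)M_-(0,z),\,P_-(n,z)M_+(0,z)\big)$, and it suffices to show that the action of $P_-(n,z)$ contracts $d_\infty$ by the factor $(1+y^2)^{-n}$.

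The tool for one factor is Theorem~\ref{mth}: applying a single transfer matrix $T_-(k,z)$ to a pair of points of the form $T_-(0,z)Z_1,\,T_-(0,z)Z_2$ with $Z_1,Z_2\in\mathcal{S}_d$ multiplies $d_\infty$ by at most $1/(1+y^2)$. The observation that makes the iteration go is that the class of admissible arguments is preserved: writing $T_-(k,z)=T_{B(k)}T_zT_J$ as in the proof of Lemma~\ref{lem2.6}, one sees that $T_-(k,z)$ maps $\mathcal{S}_d$ into $\mathcal{S}_d^{>y}:=\{W\in\mathcal{S}_d:\operatorname{Im}W>yI\}$, and the same computation with $k=0$ identifies $\mathcal{S}_d^{>y}$ with the range $T_-(0,z)(\mathcal{S}_d)$. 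Thus I would set $W_1^{(0)}=M_-(0,z)$, $W_2^{(0)}=M_+(0,z)$ and $W_j^{(k)}=T_-(k,z)W_j^{(k-1)}$, so that $W_1^{(n)}=M_-(n,z)$ and $W_2^{(n)}=P_-(n,z)M_+(0,z)$; provided $W_1^{(0)},W_2^{(0)}\in\mathcal{S}_d^{>y}$, an induction on $k$ shows that every $W_j^{(k)}$ lies in $\mathcal{S}_d^{>y}$ and, by Theorem~\ref{mth},
\begin{equation*}
d_\infty\big(W_1^{(k)},W_2^{(k)}\big)\le\frac{1}{1+y^2}\,d_\infty\big(W_1^{(k-1)},W_2^{(k-1)}\big)\,,\qquad k=1,\dots,n\,.
\end{equation*}
Chaining these $n$ inequalities yields exactly the Corollary.

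The main obstacle is the base case of this induction, namely verifying that $M_-(0,z)$ and $M_+(0,z)$ actually lie in $\mathcal{S}_d^{>y}$, that is, that their imaginary parts dominate $yI$. I would try to read this off either from the relation in Lemma~\ref{lem2.2} (which expresses $M_\pm(0,z)$ through an $m$ function at a neighboring site, hence as a transfer-matrix image) or directly from the resolvent representation \eqref{mr}; this is where I expect the real work to be. Should it fail, one can still peel off the first transfer matrix $T_-(1,z)$ without gaining a contraction factor — after which $W_j^{(1)}=T_-(1,z)W_j^{(0)}\in\mathcal{S}_d^{>y}$ automatically, and the argument above applies to the remaining $n-1$ steps — obtaining the slightly weaker bound with $(1+y^2)^{-(n-1)}$ in place of $(1+y^2)^{-n}$.
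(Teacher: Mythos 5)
Your approach is the same as the paper's: the entire published proof consists of the identity $M_-(n,z)=P_-(n,z)M_-(0,z)$ from Lemma~\ref{mfunctrans} followed by the sentence ``Hence the result follows from Theorem~\ref{mth}.'' Your first paragraph reproduces exactly that, and everything after it is extra care that the paper does not supply. That extra care is warranted: Theorem~\ref{mth} is stated only for pairs $W_j=T_-(0,z)Z_j$ with $Z_j\in\mathcal{S}_d$, and its proof uses this hypothesis precisely to guarantee $\operatorname{Im}W(t)>yI$ along the connecting path, which is what feeds Lemma~\ref{pdm} and produces the factor $1/(1+y^2)$. Your observation that $T_-(k,z)$ maps $\mathcal{S}_d$ into $\{W:\operatorname{Im}W>yI\}$, so that the hypothesis propagates automatically from step $1$ onward, is exactly the mechanism that makes the iteration legitimate, and the paper never spells it out.

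Your flagged obstacle at the base case is a genuine issue, not a technicality you failed to resolve. For $M_+(0,z)=\ip{\mathbb{\Delta}_{1}}{(J_+-zI)^{-1}\mathbb{\Delta}_{1}}$ the resolvent bound $\|(J_+-zI)^{-1}\|\le 1/y$ gives $\operatorname{Im}M_+(0,z)\le (1/y)I$, which is \emph{not} $>yI$ once $y>1$; so $M_+(0,z)$ need not lie in the range $T_-(0,z)(\mathcal{S}_d)$, and the recursion of Lemma~\ref{lem2.2} only forces $\operatorname{Im}M_\pm(n,z)>yI$ for $n\ge 1$, where a neighboring site exists. Consequently the first application of Theorem~\ref{mth} is not covered by its hypotheses, and what the argument actually delivers is your weaker bound with $(1+y^2)^{-(n-1)}$ (or the stated bound under the additional assumption $\operatorname{Im}M_\pm(0,z)>yI$). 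The paper's proof silently assumes the base case; you identified the one real gap in it.
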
 

\begin{proof}
For any $z\in\C^+$ and $n\in\N$ by Lemma \ref{mfunctrans} we have 
\begin{equation*}
M_-(n,z) = P_-(n,z) M_-(0,z)\,.
\end{equation*}
Hence the result follows from Theorem \ref{mth}.
\end{proof}
 
We define another $m$ function on the left half line $(-\infty, n)$.  Assume that $F_-(n,z)$ is a square summable matrix-valued solution to equation (\ref{ds}) defined by equation (\ref{ss}) on $(-\infty, n)$.  Define $\widetilde{M}_-(n,z)$ by the following:
\begin{equation*}
\widetilde{M}_-(n,z) = -F_-(n-1,z)F_-(n,z)^{-1}\,.
\end{equation*}
Similar to the proofs in Lemma \ref{lem2.2} and Proposition \ref{prop2.3}, we see that $\widetilde{M}_-(n,z)$ is symmetric and $\operatorname{Im}\widetilde{M}_-(n,z)$ is a positive operator. In addition, $\widetilde{M}_-(n,z)$ and $M_-(n,z)$ are related by the following equation:
\begin{equation*}
\widetilde{M}_-(n,z) = M_-(n,z) -zI + B(n)\,.
\end{equation*}
For fixed $n\in\N$, the mappings $z\mapsto M_{\pm}(n,z)$ are matrix valued Herglotz functions. Set $M_+(z) =  M_+(0,z)$. Since $M_+$ is a  matrix-valued Herglotz function, by Riesz-Herglotz representation, there exists a matrix-valued measure $\mu$ on the bounded Borel subset of $\R$ so that 
\begin{equation*}
M_+(z) = C + Dz + \int_{\R} \left(\frac{1}{\lambda-z} - \frac{\lambda}{1+\lambda^2}\right)\,d\mu_+ \,,
\end{equation*}
for some constants $C$ and $D$. In addition,  $M_+(z)$ has finite normal limits, that is
\begin{equation*}
M_+(\lambda\pm i0) = \lim_{\varepsilon\downarrow 0} M_+(\lambda\pm i\varepsilon)
\end{equation*}
for almost all $\lambda\in\R$, see \cite{SF, FG}. More on matrix-valued Herglotz functions are explained  in the same papers. By the Lebesgue Decomposition Theorem, the matrix-valued Borel measure $\mu+$ can be decomposed as:
\begin{equation*}
\mu_+ = \mu_{+,ac} + \mu_{+,sc} +\mu_{+,pp}\,,
\end{equation*}
where $\mu_{+,ac}$ is the measure that is absolutely continuous with respect to $\mu_+$, $\mu_{+,sc}$ is the singular continuous part of $\mu_+$ and $\mu_{+,pp}$ is the pure point part of $\mu_+$.  Let $\Sigma$, $\Sigma_{ac}$, $\Sigma_{sc}$, and $\Sigma_{pp}$ be the topological support of $\mu_+$, $\mu_{+,ac}$, $\mu_{+,sc}$, and $\mu_{+,pp}$ respectively. 

The following theorem from \cite{FG} showed the connection between the matrix-valued Herglotz functions and the essential support of corresponding spectral measures.  We state it without proof.
\begin{theorem}
\begin{equation*}
\begin{aligned}
\Sigma_{ac} &= \bigcup_{r=1}^d \left\{\lambda\in\R : \lim_{\varepsilon\downarrow 0} M_+(\lambda + i\varepsilon)\textrm{ exists, } \operatorname{rank }\operatorname{Im}(M_+(\lambda + i0)) = r \right\} \\
\Sigma_{sc} &= \bigcup_{r=1}^d \left\{\lambda\in\R : \lim_{\varepsilon\downarrow 0} \operatorname{Im}(\operatorname{tr } M_+(\lambda + i\varepsilon))=\infty,\,\lim_{\varepsilon\downarrow 0}\varepsilon\operatorname{tr }M_+(\lambda + i\varepsilon)=0 \right\} \\	 
\Sigma_{pp} & = \bigcup_{r=1}^d \left\{\lambda\in\R : \operatorname{rank }\lim_{\varepsilon\downarrow 0}\varepsilon M_+(\lambda + i\varepsilon) = r \right\}\,.
\end{aligned}
\end{equation*}
\end{theorem}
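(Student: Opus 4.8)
The plan is to reduce the matrix statement to the classical scalar boundary-value theory of Herglotz functions, applied fiberwise. First I would record two structural facts about matrix-valued Herglotz functions, both standard and contained in \cite{FG, SF}: (i) $M_+$ has finite normal limits $M_+(\lambda+i0)$ for Lebesgue-a.e.\ $\lambda\in\R$; and (ii) with $\omega:=\operatorname{tr}\mu_+$ denoting the scalar positive trace measure, one has $\mu_+\ll\omega$, so that $d\mu_+=\Xi(\lambda)\,d\omega(\lambda)$ for an $\omega$-measurable, positive semidefinite matrix function $\Xi$ with $\operatorname{tr}\Xi\equiv 1$. The consequence of (ii) used throughout is that the Lebesgue decomposition $\mu_+=\mu_{+,ac}+\mu_{+,sc}+\mu_{+,pp}$ is obtained by multiplying $\Xi$ against the scalar Lebesgue decomposition $\omega=\omega_{ac}+\omega_{sc}+\omega_{pp}$; in particular each of $\mu_{+,ac},\mu_{+,sc},\mu_{+,pp}$ has the same topological support as the corresponding scalar part, and the rank of $\Xi$ records the local multiplicity.

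For the absolutely continuous part I would invoke the matrix Fatou theorem: entrywise, $\tfrac{d\mu_{+,ac}}{d\lambda}(\lambda)=\tfrac1\pi\operatorname{Im}M_+(\lambda+i0)$ for a.e.\ $\lambda$. Hence $\mu_{+,ac}$ is carried by the set where $M_+(\lambda+i0)$ exists and $\operatorname{Im}M_+(\lambda+i0)\neq 0$, and $\Sigma_{ac}$ is (up to a Lebesgue-null set) the closure of that set. Partitioning it according to the value $r=\operatorname{rank}\operatorname{Im}M_+(\lambda+i0)\in\{1,\dots,d\}$ yields exactly the displayed union, with $r$ the local spectral multiplicity of the a.c.\ spectrum.

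For the pure point part I would compute, for fixed $\lambda_0\in\R$, the limit $\lim_{\varepsilon\downarrow 0}\varepsilon M_+(\lambda_0+i\varepsilon)$ directly from the Riesz--Herglotz representation: after multiplication by $\varepsilon$ the terms $C$ and $Dz$ vanish in the limit, the real part of the integral tends to $0$ once the subtracted term $\tfrac{\lambda}{1+\lambda^2}$ is taken into account (it supplies the decay needed for dominated convergence), and the imaginary part tends to $\mu_+(\{\lambda_0\})$, so $\lim_{\varepsilon\downarrow 0}\varepsilon M_+(\lambda_0+i\varepsilon)=i\,\mu_+(\{\lambda_0\})$. This is a positive semidefinite matrix whose rank equals $\operatorname{rank}\mu_+(\{\lambda_0\})$; thus $\lambda_0\in\Sigma_{pp}$ precisely when that rank is at least $1$, and splitting by its value gives the displayed union for $\Sigma_{pp}$.

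Finally, for the singular continuous part I would pass to the scalar Herglotz function $\operatorname{tr}M_+$, whose representing measure is $\omega=\operatorname{tr}\mu_+$. The classical de la Vall\'ee Poussin / Fatou criteria give that $\omega_{sc}+\omega_{pp}$ is carried by $\{\lambda:\operatorname{Im}\operatorname{tr}M_+(\lambda+i\varepsilon)\to\infty\}$, while the atoms are detected by $\omega(\{\lambda\})=\lim_{\varepsilon\downarrow 0}\varepsilon\operatorname{Im}\operatorname{tr}M_+(\lambda+i\varepsilon)$ (the computation of the previous paragraph applied to $\operatorname{tr}M_+$); hence $\omega_{sc}$ is carried by the set where the first limit is $+\infty$ and simultaneously $\varepsilon\operatorname{tr}M_+(\lambda+i\varepsilon)\to 0$, and by (ii) this set is the topological support $\Sigma_{sc}$ of $\mu_{+,sc}$ (the union over $r$ is cosmetic here, only to display the three cases uniformly). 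The genuinely substantive input is step one --- existence of a.e.\ normal limits of $M_+$, identification of the a.c.\ density, and the mutual absolute continuity $\mu_+\ll\operatorname{tr}\mu_+$ --- which is exactly the content of \cite{FG}; once this matrix boundary-value analysis is granted, the remaining steps are the scalar Fatou / de la Vall\'ee Poussin facts plus the elementary rank bookkeeping, so the main obstacle lies in \cite{FG} rather than in the reduction.
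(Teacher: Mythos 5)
The paper offers no proof of this theorem to compare against: it is quoted from \cite{FG} and explicitly ``stated without proof.'' Your sketch correctly reconstructs the standard argument behind that result --- a.e.\ normal boundary values and the matrix Fatou identity $\tfrac{d\mu_{+,ac}}{d\lambda}=\tfrac1\pi\operatorname{Im}M_+(\lambda+i0)$ for the a.c.\ part, the computation $\lim_{\varepsilon\downarrow 0}\varepsilon M_+(\lambda_0+i\varepsilon)=i\,\mu_+(\{\lambda_0\})$ for the atoms, the trace measure $\omega=\operatorname{tr}\mu_+$ with $\mu_+\ll\omega$ to reduce the singular continuous case to the scalar de la Vall\'ee Poussin criterion, and rank bookkeeping via the density $\Xi$ --- and you are right that the substantive input is the boundary-value theory in \cite{FG} rather than the reduction. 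One small caution: the sets displayed in the theorem are \emph{essential} (minimal) supports of the respective measures, determined only up to null sets of the appropriate type, whereas the surrounding text of the paper calls $\Sigma$, $\Sigma_{ac}$, etc.\ topological supports; your phrase ``the closure of that set'' for $\Sigma_{ac}$ inherits that same conflation, and for $\Sigma_{pp}$ the displayed set is precisely the (not necessarily closed) set of atoms. This does not affect the correctness of your argument, only the reading of what is being characterized.
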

 
Describing the absolutely continuous spectrum is one of the main goals in studying vector-valued discrete Schr{\"o}dinger operators.  Remling successfully did this for one dimensional Schr{\"o}dinger operators, \cite{CR1}, and in his renowned paper \cite{CR}, he describes it for Jacobi matrices.  The techniques utilized there involve the concept of reflectionless.  An operator is said to be {\it reflectionless} on a Borel set $B\subseteq\R$ if the associated Herglotz functions satisfy $m_+(x)=-\overline{m_-(x)}$ for almost all $x\in B$ in the Lebesgue sense.  This definition naturally extends to our matrix-valued Herglotz fuctions $M_\pm$.  The above theorem is also utilized in describing the absolutely continuous spectrum as yields precise limit conditions on the Herglotz functions.  Describing the absolutely continuous spectrum is an extremely difficult question and this is one of our main goals in future work.  

{\bf Acknowledgment:}
The first author  was partially supported by the Office of Sponsored Research at Embry-Riddle Aeronautical University, Florida.


\begin{thebibliography}{99}

\bibitem{KA1}K. R. Acharya, Titchmarsh-Weyl theory for vector-valued discrete Schr{\"o}dinger operators, {\it Anal. Math. Phys.}, 9, 1831–1847, 2019.
		
\bibitem{KAM} K. R. Acharya,  M. McBride, Action of complex symplectic matrices on the Siegel upper half space, {\it Lin. Alg. App.}, 563, 47–62, 2019.

%\bibitem{BP} S.V. Breimesser and D.B. Pearson, Asymptotic value distribution for solutions of the Schr\"odinger equation, {\it Math. Phys. Anal. Geom.}, 3, 385-403, 2000.
		

\bibitem{SF} S. L. Clark and F. Gesztesy, Weyl-Titchmarsh M-function asymptotics for matrix valued Schr\"odinger operators, {\it Proc. Lon. Math. Soc.}, 82, 701-720, 2001.
		
\bibitem{CGH} S. L. Clark and F. Gesztesy, H. Holden, and B. M. Levitan,  Borg-type theorems for matrix-valued Schr\"odinger operators, {\it Jour. Diff. Eq.}, 167, 181-210, 2000. 
				
\bibitem{DD} D. Damanik, A. Pushnitski, B. Simon, The analytic theory of matrix orthogonal polynomials. {\it Surv. Approx. Theo.}, 4, 1-85, 2008.		
		
\bibitem{JFRA} J. Eckhardt, F. Gesztesy, R. Nichols, A. Sakhnovich, and G. Teschl,  Inverse spectral problems for Schr\"odinger-type operators with distributional matrix-valued potentials,
{\it Differential Integral Equations} 28 (2015), 505–522.
		
\bibitem{JG} J. S. Geronimo, Scattering theory and matrix orthogonal polynomials on the real line, {\it Cir. Sys. Sig. Proc.}, 1, 472-495, 1982.
		
\bibitem{FG} F. Gesztesy, E. Rsekanovskii, On matrix-valued Herglotz functions. {\it Math. Machr.}, 218, 61-138, 2000.
		
\bibitem{RK} R. Kozhan, Equivalence classes of block Jacobi matrices. {\it Proc. Amer. Math. Soc.}, 139, 799-805, 2011.
				
\bibitem{CR} C. Remling, The absolutely continuous spectrum of Jacobi Matrices, {\it Ann. Math.}, 174, 125-171, 2011.
		
\bibitem{CR1} C. Remling, The absolutely continuous spectrum of one-dimensional Schr\"odinger operators, {\it Math. Phys. Anal. Geom.}, 10, 359-373, 2007.
	
\bibitem{GT} G. Teschl, Jacobi Operators and Completely Integrable Nonlinear Latices, Mathematical Monographs and Surveys,  Vol.72, American Mathematical Society, Providence, 2000.

\end{thebibliography}
\end{document}